\newcommand{\etal}{et al.}
\newcommand{\STree}{\mathsf{STree}}
\newcommand{\STrie}{\mathsf{STrie}}
\newcommand{\LST}{\mathsf{LST}}
\newcommand{\PLST}{\mathsf{preLST}}
\newcommand{\activeNode}{\mathit{activePoint}}
\newcommand{\nextNode}{\mathit{insertPoint}}
\newcommand{\Root}{\mathit{root}}
\newcommand{\newLeaf}{\mathit{newLeaf}}
\newcommand{\PrevLeaf}{\mathit{prevLeaf}}
\newcommand{\PrevInsPoint}{\mathit{prevInsPoint}}
\newcommand{\PrevLabel}{\mathit{prevLabel}}
\newcommand{\Flag}{\mathit{mismatch}}
\newcommand{\LastNode}{\mathit{newNode}}
\newcommand{\Rlink}{\mathsf{rlink}}
\newcommand{\FLink}{\mathsf{fastLink}}
\newcommand{\Slink}{\mathsf{slink}}
\newcommand{\Tree}{\mathcal{T}}
\newcommand{\PTree}{\mathcal{P}}
\newcommand{\STriedepth}{\mathsf{STrieDepth}}
\newcommand{\Depth}{\mathsf{depth}}
\newcommand{\Child}{\mathsf{child}}
\newcommand{\TChild}{\mathsf{t1child}}
\newcommand{\Parent}{\mathsf{parent}}
\newcommand{\TParent}{\mathsf{t1parent}}
\newcommand{\CreateTypeTwo}{\mathsf{createType2}}
\newcommand{\FastMatching}{\mathsf{fastMatching}}
\newcommand{\FastDecompact}{\mathsf{fastDecompact}}
\newcommand{\ReadEdge}{\mathsf{readEdge}}
\newcommand{\Split}{\mathsf{split}}
\newcommand{\NULL}{\mathsf{NULL}}
\newcommand{\Plus}{\mathsf{+}}
\newcommand{\Type}{\mathsf{type}}
\newcommand{\Label}{\mathsf{label}}
\newcommand{\True}{\mathsf{true}}
\newcommand{\False}{\mathsf{false}}
\newcommand{\ot}{:=}
\newcommand{\Ret}{\mathbf{return}}
\newcommand{\Leaf}{\mathit{leaf}}
\newtheorem{definition}{Definition}
\newtheorem{theorem}{Theorem}
\newtheorem{lemma}{Lemma}
\begin{document}

\title{Online Algorithms for Constructing Linear-size Suffix Trie}

\author[1]{Diptarama Hendrian}
\author[2]{Takuya Takagi}
\author[3]{Shunsuke Inenaga}

\affil[1]{Graduate School of Information Sciences, Tohoku University, Japan\\
	\texttt{diptarama@tohoku.ac.jp}}
\affil[2]{Fujitsu Laboratories Ltd., Japan}
\affil[3]{Department of Informatics, Kyushu University, Japan}

\date{}

\maketitle            

\begin{abstract}
The suffix trees are fundamental data structures for various kinds of string processing.
The suffix tree of a string $T$ of length $n$ has $O(n)$ nodes and edges,
and the string label of each edge is encoded by a pair of positions in $T$.
Thus, even after the tree is built, the input text $T$ needs to be kept stored
and random access to $T$ is still needed. 
The \emph{linear-size suffix tries} (\emph{LSTs}), proposed by Crochemore et al.
[Linear-size suffix tries, TCS 638:171-178, 2016],
are a ``stand-alone'' alternative to the suffix trees.
Namely, the LST of a string $T$ of length $n$ occupies $O(n)$ total space,
and supports pattern matching and other tasks in the same efficiency as the suffix tree
without the need to store the input text $T$.
Crochemore et al. proposed an \emph{offline} algorithm which transforms
the suffix tree of $T$ into the LST of $T$ in $O(n \log \sigma)$ time and $O(n)$ space,
where $\sigma$ is the alphabet size.
In this paper, we present two types of \emph{online} algorithms
which ``directly'' construct the LST, from right to left, and from left to right,
without constructing the suffix tree as an intermediate structure.
Both algorithms construct the LST incrementally when a new symbol is read,
and do not access to the previously read symbols.
The right-to-left construction algorithm works in $O(n \log \sigma)$ time and $O(n)$ space
and the left-to-right construction algorithm works in $O(n (\log \sigma + \log n / \log \log n))$ time and $O(n)$ space.
The main feature of our algorithms is that the input text does not need to be stored.

\end{abstract}

\section{Introduction}
Suffix tries are conceptually important string data structures
that are the basis of more efficient data structures.
While the suffix trie of a string $T$ supports fast queries and operations
such as pattern matching, the size of the suffix trie can be $\Theta(n^2)$ in the worst case,
where $n$ is the length of $T$.
By suitably modifying suffix tries, 
we can obtain linear $O(n)$-size string data structures 
such as suffix trees~\cite{Weiner1973},
suffix arrays~\cite{Manber1993},
directed acyclic word graphs (DAWGs)~\cite{Blumer1985},
compact DAWGs (CDAWGs)~\cite{Blumer1987},
position heaps~\cite{Ehrenfeucht2011}, and so on.
In the case of the integer alphabet of size polynomial in $n$,
all these data structures can be constructed in $O(n)$ time and space
in an \emph{offline} manner~\cite{Crochemore1997const,Crochemore1997,Farach-ColtonFM00,FujishigeTIBT16,INIBT16,KarkkainenSB06,NarisawaHIBT17}.
In the case of a general ordered alphabet of size $\sigma$,
there are \emph{left-to-right} \emph{online} construction algorithms 
for suffix trees~\cite{Ukkonen1995},
DAWGs~\cite{Blumer1985}, CDAWGs~\cite{InenagaHSTAMP05}, and position heaps~\cite{Kucherov2013}.
Also, there are \emph{right-to-left} \emph{online} construction algorithms 
for suffix trees~\cite{Weiner1973} and position heaps~\cite{Ehrenfeucht2011}.
All these online construction algorithms run in $O(n \log \sigma)$ time with $O(n)$ space.

Suffix trees are one of the most extensively studied string data structures,
due to their versatility.
The main drawback is, however, that
each edge label of suffix trees needs to be encoded as a pair of text positions,
and thus the input string needs to be kept stored and be accessed even after
the tree has been constructed.
Crochemore et al.~\cite{Crochemore2016} proposed a new suffix-trie
based data structure called \emph{linear-size suffix tries} (\emph{LSTs}).
The LST of $T$ consists of the nodes of the suffix tree of $T$,
plus a linear-number of auxiliary nodes and suffix links.
Each edge label of LSTs is a single character,
and hence the input text string can be discarded after the LST has been built.
The total size of LSTs is linear in the input text length,
yet LSTs support fundamental string processing queries such as pattern matching
within the same efficiency as their suffix tree counterpart~\cite{Crochemore2016}.

Crochemore et al.~\cite{Crochemore2016} showed
an algorithm which transforms the \emph{given} suffix tree of string $T$
into the LST of $T$ in $O(n \log \sigma)$ time and $O(n)$ space.
This algorithm is \emph{offline}, since it requires the suffix tree to be completely built first.
No efficient algorithms which construct LSTs \emph{directly} (i.e. without suffix trees)
and in an \emph{online} manner were known.

This paper proposes two online algorithms that construct LSTs directly from the given string.
The first algorithm is based on Weiner's suffix tree construction~\cite{Weiner1973},
and constructs the LST of $T$ by scanning $T$ \emph{from right to left}.
On the other hand, the second algorithm is based on Ukkonen's suffix tree construction~\cite{Ukkonen1995},
and constructs the LST of $T$ by scanning $T$ from \emph{left to right}.
Both algorithms construct the LST incrementally when a new symbol is read,
and do not access the previously read symbols.
This also means that our construction algorithms do not need to store the input text,
and the currently processed symbol in the text can be immediately discarded
as soon as the symbol at the next position is read.
The right-to-left construction algorithm works in $O(n \log \sigma)$ time and $O(n)$ space
and the left-to-right construction algorithm works in $O(n (\log \sigma + \frac{\log n}{ \log \log n}))$ time and $O(n)$ space.

\section{Preliminaries}

Let $\Sigma$ denote an \emph{alphabet} of size $\sigma$.
An element of $\Sigma^*$ is called a \emph{string}.
For a string $T \in \Sigma^*$, the length of $T$ is denoted by $|T|$.
The \emph{empty string}, denoted by $\varepsilon$, is the string of length $0$.
For a string $T$ of length $n$, $T[i]$ denotes the $i$-th symbol of $T$
and $T[i:j] = T[i]T[i+1] \dots T[j]$ denotes the \emph{substring} of $T$ that begins at position $i$ and ends at position $j$ for $1 \leq i \leq j \leq n$.
Moreover, let $T[i:j] = \varepsilon$ if $i > j$.
For convenience, we abbreviate $T[1:i]$ to $T[:i]$ and $T[i:n]$ to $T[i:]$,
which are called \emph{prefix} and \emph{suffix} of $T$, respectively.

\subsection{Linear-size suffix trie}
The \emph{suffix trie} $\STrie(T)$ of a string $T$ is a trie
that represents all suffixes of $T$.
The \emph{suffix link} of each node $U$ in $\STrie(T)$
is an auxiliary link that points to $V = U[2:|U|]$. 
The \emph{suffix tree}~\cite{Weiner1973} $\STree(T)$ of $T$ is a path-compressed trie
that represents all suffixes of $T$.
We consider the version of suffix trees where
the suffixes that occur twice or more in $T$
can be represented by non-branching nodes.
The \emph{linear-size suffix trie} $\LST(T)$ of a string $T$,
proposed by Crochemore et al.~\cite{Crochemore2016},
is another kind of tree that represents all suffixes of $T$,
where each edge is labeled by a single symbol.
The nodes of $\LST(T)$ are a subset of the nodes of $\STrie(T)$,
consisting of the two following types of nodes:
\begin{enumerate}
	\item Type-1: The nodes of $\STrie(T)$ whose that also nodes of $\STree(T)$.
	\item Type-2: The nodes of $\STrie(T)$
              that not type-1 nodes and their suffix links point to type-1 nodes.
\end{enumerate}
A non-suffix type-1 node has two or more children and a type-2 node has only one child.
When $T$ ends with a unique terminate symbol $\$$ that
does not occur elsewhere in $T$,
then all type-1 nodes in $\LST(T)$ has two or more children.
The nodes of $\STrie(T)$ that are neither type-1 nor type-2 nodes of $\LST(T)$
are called \emph{implicit nodes} in $\LST(T)$.

We identify each node in $\LST(T)$ by the substring of $T$ that is the path label from $\Root$ to the node in $\STrie(T)$.
Let $U$ and $V$ be nodes of $\LST(T)$ such that $V$ is a child of $U$.
The edge label of $(U,V) = c$ is the same as the label of the first edge on the path from $U$ to $V$ in $\STrie(T)$.
If $V$ is not a child of $U$ in $\STrie(T)$, i.e. the length of the path label from $U$ to $V$ is more than one,
we put the $\Plus$ sign on $V$ and we call $V$ a $\Plus$-node.
\Cref{fig:suffix_trie_tree} shows an example of a suffix trie, linear-size suffix trie, and suffix tree.

For convenience, we assume that there is an auxiliary node $\bot$
as the parent of the root of $LST(T)$,
and that the edge from $\bot$ to the root is labeled by any symbol.
This assures that for each symbol appearing in $T$
the root has a non $\Plus$ child.
This will be important for the construction of LSTs 
and pattern matching with LSTs (c.f. Lemma~\ref{lem:readlabel}).

\begin{figure}[t]
	\centering
	\begin{minipage}[t]{0.32\hsize}
		\centering
		\includegraphics[scale=1.1]{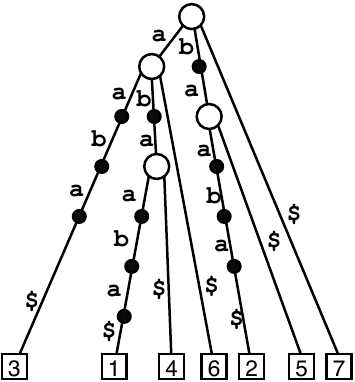}\\
		\ \ \ \small{Suffix trie}
	\end{minipage}
	\begin{minipage}[t]{0.32\hsize}
		\centering
		\includegraphics[scale=1.1]{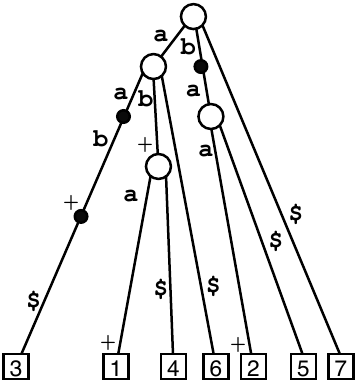}\\
		\ \ \ \small{Linear-size suffix trie}
	\end{minipage}
	\begin{minipage}[t]{0.32\hsize}
		\centering
		\includegraphics[scale=1.1]{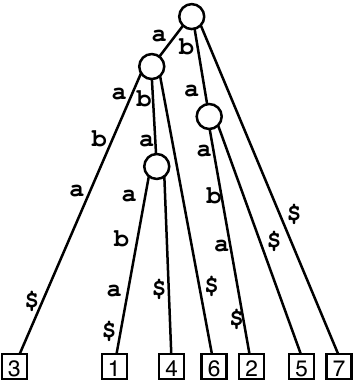}\\
		\ \ \ \small{Suffix tree}
	\end{minipage}
	\caption{
		The suffix trie, linear-size suffix trie, and suffix tree of $T = \mathtt{abaaba\texttt{\$}}$.
	}
	\label{fig:suffix_trie_tree}
\end{figure}

In the description of our algorithms, we will use the following notations.
For any node $U$, $\Parent(U)$ denotes the parent node of $U$.
For any edge $(U, V)$,
$\Label(U,V)$ denotes the label of the edge connecting $U$ and $V$,
For a node $U$ and symbol $c$,
$\Child(U,c)$ denotes the child of $U$ whose incoming edge label is $c$,
if it exists.
We denote $\Plus(U)=\True$ if $U$ is a $\Plus$-node, and $\Plus(U)=\False$ otherwise.
The suffix link of a node $U$ is defined as $\Slink(U)=V$, where $V = U[2:|U|]$.
The reversed suffix link of a node $U$ with a symbol $c \in \Sigma$
is defined as $\Rlink(U,c)=V$, if there is a node $U$ such that $cU = V$.
It is undefined otherwise.
For any type-1 node $U$,
$\TParent(U)$ denotes the nearest type-1 ancestor of $U$,
and $\TChild(U,c)$ denotes the nearest type-1 descendant of $U$ on $c$ edge.
For any type-2 node $U$,
$\Child(U)$ is the child of $U$,
and $\Label(U)$ is the label of the edge connecting $U$ and its child.

\subsection{Pattern matching using linear-size suffix trie}
In order to efficiently perform pattern matching on LSTs,
Crochemore \etal{}~\cite{Crochemore2016} introduced \emph{fast links} that are
a chain of \emph{suffix links of edges}.
\begin{definition}
	For any edge $(U,V)$, let $\FLink(U,V)=(\Slink^h(U),\Slink^h(V))$
	such that $\Slink^{h}(U) \ne \Parent(\Slink^{h}(V))$ and $\Slink^{h-1}(U) = \Parent(\Slink^{h-1}(V))$, where $\Slink^0(U) = U$ and  $\Slink^i(U) = \Slink(\Slink^{i-1}(U))$.
\end{definition}
Here, $h$ is the minimum number of suffix links that we need to  
traverse so that $\Slink^{h}(U) \ne \Parent(\Slink^{h}(V))$.
Namely, after taking $h$ suffix links from edge $(U,V)$,
there is at least one type-2 node in the path from $\Slink^h(U)$ to $\Slink^h(V)$.
Since type-2 nodes are not branching,
we can use the labels of the type-2 nodes in this path to retrieve
the label of the edge $(U, V)$ (see Lemma~\ref{lem:readlabel} below).
Provided that $\LST(T)$ has been constructed,
the fast link $\FLink(U,V)$ for every edge $(U,V)$ can be computed
in a total of $O(n)$ time and space~\cite{Crochemore2016}.

\begin{lemma}[\cite{Crochemore2016}]\label{lem:readlabel}
  The underlying label of a given edge $(U,V)$ of length $\ell$ can be
  retrieved in $O(\ell\log \sigma)$ time by using fast links.
\end{lemma}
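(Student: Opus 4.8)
The plan is to retrieve the label of $(U,V)$ by following the fast link once, locating a path in $\LST(T)$ that contains at least one type-2 node, and reading the single-symbol labels of those type-2 nodes to reconstruct one symbol of the original label at a time. Concretely, let $(U',V') = \FLink(U,V) = (\Slink^h(U),\Slink^h(V))$. By definition of $\FLink$, $U'$ is a strict ancestor of $V'$ but $U' \ne \Parent(V')$, so the path from $U'$ to $V'$ in $\STrie(T)$ has length at least $2$ and therefore passes through at least one node that is non-branching in $\STree(T)$; such a node is, by the node classification of $\LST(T)$, either a type-2 node or an implicit node, and in the latter case its nearest explicit descendant along the path is a type-2 node whose $\Label$ records the needed symbol. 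The key observation is that the first symbol of the label of $(U,V)$ equals the first symbol of the label of $(U',V')$ — taking suffix links removes symbols only from the front of the path label, not from these edge labels — so reading that first symbol from the $\LST(T)$ structure below $U'$ suffices to recover $\Label(U,V)[1]$.

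First I would make precise how to read the first symbol of the label of an edge $(U',V')$ when we know $U'$ is a proper ancestor of $V'$ at distance at least $2$: descend from $U'$ one step, which by the $\bot$-trick and the definition of $\Child$ lands on a child node $W$ of $U'$ with $\Label(\Parent(U'),U')$ irrelevant and $W$ on the $U'$-to-$V'$ path; if $W = V'$ this cannot happen since the distance is $\ge 2$, so $W$ is a type-2 node (a non-branching node strictly between $U'$ and $V'$), and the first symbol we want is the incoming edge label of $W$, which is stored. Then I would recurse: having recovered the first symbol $c_1$ of $\Label(U,V)$, consider the edge $(cU, cV)$ — no, more carefully, consider the edge obtained by peeling: we now want $\Label(U,V)[2:\ell]$, which is the label of the edge $(U'',V'')$ for appropriate nodes obtained by descending one symbol in $\STrie(T)$ from $(U,V)$; repeatedly applying $\FLink$ to these length-$(\ell-1)$, then length-$(\ell-2)$, \dots, edges and reading off one symbol each time recovers all of $\Label(U,V)$.

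For the complexity, each of the $\ell$ rounds recovers exactly one symbol, and the work in a round is: one $\FLink$ traversal ($O(1)$, since fast links are stored), a constant number of $\Parent$/$\Child$/$\Label$/$\TChild$ lookups, and possibly walking down through a run of type-2 nodes to reach the first branching node. The walk-down is the delicate point: a naive descent through consecutive type-2 nodes could cost more than $O(\log\sigma)$ per round. I would argue that from $U'$ we only ever need to take \emph{one} step to read the first symbol — we do not need to walk the whole run — so the per-round cost is dominated by the single $\Child(U',c)$ lookup, which costs $O(\log\sigma)$ when children are kept in a balanced search tree keyed by symbol. Summing over $\ell$ rounds gives $O(\ell\log\sigma)$.

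The hard part will be verifying that the bookkeeping across rounds is consistent — i.e., that after recovering the first symbol we can correctly name the nodes $U'',V''$ whose edge label is the suffix $\Label(U,V)[2:\ell]$, using only $\LST(T)$ operations and without access to $T$ — and that the $\FLink$ invariant (the existence of a type-2 node on the post-link path) indeed holds in each of these $\ell$ rounds, including the boundary case where $\ell$ becomes $1$ and the edge is a genuine $\STrie(T)$ edge already present in $\LST(T)$ with its symbol stored directly. I expect this to reduce to a clean induction on $\ell$ once the role of the auxiliary node $\bot$ in guaranteeing a non-$\Plus$ child at the relevant descent steps is pinned down.
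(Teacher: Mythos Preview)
Note first that the paper does not supply its own proof of this lemma: it is cited from~\cite{Crochemore2016}, and the authors even remark that the original argument ``does not seem to guarantee that the label of a given edge is retrieved sequentially from the first symbol to the last one.'' The mechanism the paper actually spells out and justifies is in the proof of Lemma~\ref{lem:patmatch}, and your proposal should be compared to that.

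Your recursion step is where the approach breaks. After one $\FLink$ you land on $(U',V')$ and can indeed read $c_1$ as the incoming label of $W=\Child(U',c_1)$. But $W$ may itself be a $\Plus$-node: the $\STrie$-distance from $U'$ to $W$ can be some $m>1$, so passing to $W$ implicitly consumes $c_1\cdots c_m$ while you have learned only $c_1$. Your proposed fix is to recurse on an ``edge $(U'',V'')$ obtained by descending one symbol in $\STrie(T)$ from $(U,V)$,'' but that $U''$ is an \emph{implicit} node of $\LST(T)$ whenever $V$ is a $\Plus$-node (the only nontrivial case): if $U''$ were explicit it would sit strictly between $U$ and $V$, contradicting that $(U,V)$ is an LST edge. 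Hence $(U'',V)$ is not an LST edge and $\FLink$ is undefined on it. This is not the bookkeeping you flag as ``the hard part''; it is a structural obstruction. The recursion that does work, and that the paper uses in proving Lemma~\ref{lem:patmatch}, stays on genuine LST edges throughout: apply $\FLink$ repeatedly to the \emph{current} edge until the target is not a $\Plus$-node, read that single symbol, and then continue from that (type-2, hence non-branching) node along its unique outgoing edge; each such chain of $\FLink$ applications retrieves at least one new symbol, which is what gives the $O(\ell\log\sigma)$ bound.
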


Crochemore et al.~\cite{Crochemore2016} claimed that
due to Lemma~\ref{lem:readlabel}
one can perform pattern matching for a given pattern $P$ in $O(|P| \log \sigma)$ time
with the LST.
However, the proofs provided in~\cite{Crochemore2016}
for the correctness and time efficiency of their pattern
matching algorithm looks unsatisfactory to us,
because the algorithm of Crochemore et al.~\cite{Crochemore2016} does not seem
to guarantee that the label of a given edge is retrieved
sequentially from the first symbol to the last one (see also~\cite{Takagi2017}).
Still, in the following lemma we present
an algorithm which efficiently performs the longest prefix match
for a given pattern on the LST with fast links:
\begin{lemma}\label{lem:patmatch}
	Given $\LST(T)$ and a pattern $P$, 
	we can find the longest prefix $P'$ of $P$ that occurs in $T$ in $O(|P'|\log\sigma)$ time.
\end{lemma}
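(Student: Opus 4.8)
The plan is to perform a standard top-down traversal of $\LST(T)$ guided by $P$, descending edge by edge, but being careful about two features peculiar to LSTs: edges may be $\Plus$-edges (whose underlying label is longer than one symbol and must be recovered via fast links using Lemma~\ref{lem:readlabel}), and the auxiliary node $\bot$ guarantees that the root always has a non-$\Plus$ child for every symbol occurring in $T$, so the matching can always get started. Concretely, I would maintain a current node $U$ (initially $\Root$) and a pointer $i$ into $P$ (initially $1$), and repeat the following: look up $\Child(U, P[i])$; if it does not exist, stop and report $P' = P[:i-1]$; otherwise let $V = \Child(U,P[i])$ and let $\ell$ be the length of the underlying label of edge $(U,V)$ in $\STrie(T)$. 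If $\ell = 1$ (a non-$\Plus$ edge) we simply advance $U \ot V$, $i \ot i+1$. If $\ell > 1$ we must check whether $P[i:i+\ell-1]$ matches that label; using fast links we read the label symbol by symbol and compare, stopping at the first mismatch or at the end of $P$, whichever comes first.

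**Next I would argue correctness.** Since the nodes of $\LST(T)$ are a subset of the nodes of $\STrie(T)$ and the edge labels of $\LST(T)$ are exactly the (possibly compacted) path labels in $\STrie(T)$, a root-to-node path in $\LST(T)$ spells out exactly a prefix of some suffix of $T$, i.e.\ a substring of $T$; conversely every substring of $T$ is a prefix of a path to some (implicit, type-1, or type-2) node of $\STrie(T)$, hence a prefix of the label of some outgoing edge from the deepest $\LST$ node on that path. Therefore the procedure above, which greedily follows the unique edge starting with the next needed symbol and matches as far into its label as $P$ allows, terminates exactly at the point corresponding to the longest prefix $P'$ of $P$ occurring in $T$. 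The use of the $\bot$ node and the property that ``for each symbol appearing in $T$ the root has a non-$\Plus$ child'' ensures the first step never spuriously fails: if $P[1]$ occurs in $T$, a matching child of the root exists.

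**For the running time**, the crucial point is that every symbol of $P'$ is charged exactly once. When we traverse a non-$\Plus$ edge we consume one symbol of $P'$ in $O(\log\sigma)$ time for the child lookup. When we traverse (fully or partially) a $\Plus$-edge, Lemma~\ref{lem:readlabel} lets us read and compare the underlying label of length $\ell$ in $O(\ell \log\sigma)$ time; but of these $\ell$ symbols, all but possibly the last one (the mismatching symbol, if any) belong to $P'$, and the mismatching comparison happens at most once over the whole run since it ends the algorithm. Hence the total work is $O((|P'|+1)\log\sigma) = O(|P'|\log\sigma)$. I would also note that we never need the input text $T$: the child pointers, $\Plus$ flags, and fast links stored in $\LST(T)$ suffice, which is the whole point of the data structure.

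**The main obstacle** I anticipate is the bookkeeping around $\Plus$-edges: Lemma~\ref{lem:readlabel} gives us the label of an $\LST$ edge $(U,V)$ of length $\ell$ in $O(\ell\log\sigma)$ time, but to get the clean $O(|P'|\log\sigma)$ bound I must make sure the algorithm reads the label only up to where $P$ agrees with it (so that a long $\Plus$-edge whose label disagrees with $P$ after one symbol is not charged its full length) — or, if Lemma~\ref{lem:readlabel} only delivers the whole label at once, argue that the single final over-read of $O(\ell\log\sigma)$ is absorbed because it occurs once and $\ell = O(|P'|)$ is false in general, so instead I should invoke the sequential-retrieval refinement hinted at in the text (the remark that the label is recovered ``sequentially from the first symbol to the last one''). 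Making that sequential-access guarantee precise, and confirming it costs $O(\log\sigma)$ amortized per symbol delivered, is the part that needs the most care; everything else is the routine trie-walk analysis sketched above.
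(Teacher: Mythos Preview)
Your overall approach---top-down traversal, using fast links to decode $\Plus$-edges---is the same as the paper's, and you correctly locate the crux. But there is a genuine gap, and it is precisely the obstacle you name without resolving.

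Lemma~\ref{lem:readlabel} (from Crochemore et al.) only says the \emph{full} label of an edge of length $\ell$ can be retrieved in $O(\ell\log\sigma)$ time; it does not promise sequential symbol-by-symbol access at $O(\log\sigma)$ per delivered symbol. The paper itself remarks that the original algorithm ``does not seem to guarantee that the label of a given edge is retrieved sequentially from the first symbol to the last one''. So your plan to invoke Lemma~\ref{lem:readlabel} as a black box cannot yield the $O(|P'|\log\sigma)$ bound on the final, mismatching edge $e_m$: that edge may be arbitrarily longer than $|P_m|$, and your sentence ``of these $\ell$ symbols, all but possibly the last one belong to $P'$'' is simply false in this situation (as you yourself note a paragraph later).

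The paper fills exactly this hole. It gives a concrete recursive $\FLink$-chasing procedure that reads symbols of an edge label in order, and then bounds the work on $e_m$ by a \emph{string-depth} argument rather than a per-symbol amortization: the first $\FLink$ application on $e_m$ starts from an edge whose source has string depth $|P_1\cdots P_{m-1}|$; each $\FLink$ strictly decreases this depth; and the chain must terminate because every symbol of $T$ is a non-$\Plus$ child of the root. Hence at most $|P_1\cdots P_{m-1}|$ applications of $\FLink$ occur before the type-2 node carrying the mismatching outgoing symbol is reached, independently of the length of $e_m$. This depth bound is the missing ingredient; without it your running-time claim does not follow.
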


\begin{figure}[t]
        \centering
	\includegraphics[scale=0.5]{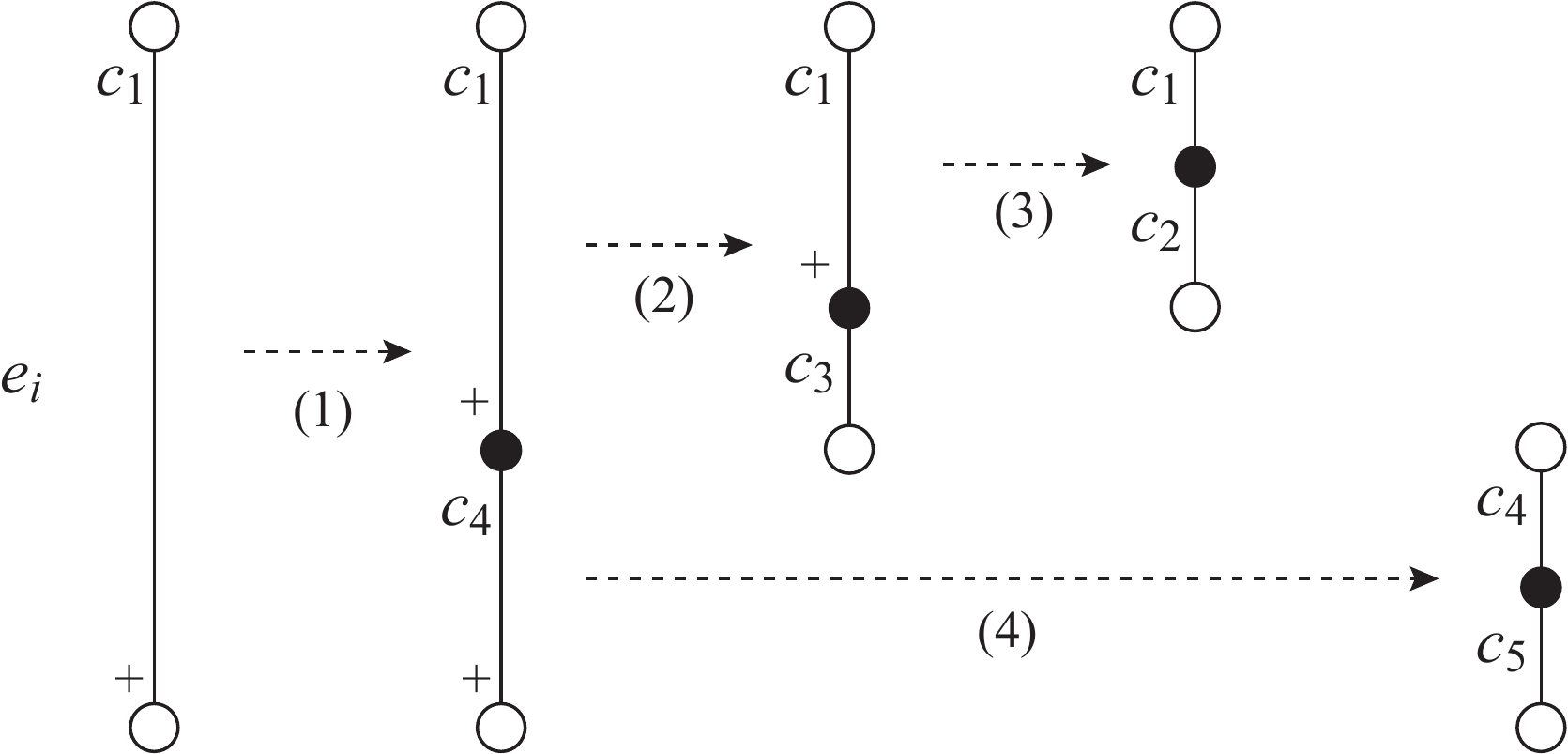}
	\caption{Illustration for our pattern matching algorithm with LST. The dashed arrows represent fast links. The number in parentheses show the orders of applications of fast links when traversing $P_i = c_1 c_2 c_3 c_4 c_5$ on the edge $e_i$.}
	\label{fig:pattern_matching}
\end{figure}

\begin{proof}
  Let $P_{1}P_{2} \cdots P_{m} = P'$ be the factorization of $P'$ such that $P_{1} \cdots P_{i}$ is a node in $\LST(T)$ for $1 \le i < m$,  $P_{1} \cdots P_{i} = \Parent(P_{1} \cdots P_{i+1})$ for $1 \le i < m-1$, and $P_{1} \cdots P_{m-1}$ is the longest prefix of $P'$ that is a node in $\LST(T)$. If $P_{1} \cdots P_{m-1} = P'$, then $P_m = \varepsilon$. In what follows, we consider a general case where $P_m \neq \varepsilon$.

  Suppose we have successfully traversed up to $P_1 \cdots P_{i-1}$,
  and let $U$ be the node representing $P_1 \cdots P_{i-1}$.
  If $U$ has no out-going edge labeled $c_1 = P_i[1] = P[|P_1 \cdots P_{i-1}|+1]$,
  then the traversal terminates on $U$.
  Suppose $U$ has an out-going edge labeled $c_1$ and let $V$ be the child
  of $U$ with the $c_1$-edge.
  We denote this edge by $e_i = (U, V)$.
  See also Figure~\ref{fig:pattern_matching} for illustration.
  If $V$ is a not $\Plus$-node, then we have read $c_1$
  and set $U \leftarrow V$ and continue with the next symbol $c_2 = P_i[2] = P[|P_1 \cdots P_{i-1}|+2]$.
  Otherwise (if $V$ is a $\Plus$-node),
  then we apply $\FLink$ from edge $(U, V)$ recursively,
  until reaching the edge $(U', V')$ such that $V'$ is not a $\Plus$-node.
  Then we move onto $V'$. Note that by the definition of $\FLink$,
  $V'$ is always a type-2 node.
  We then continue the same procedure by setting $U \leftarrow V'$
  with the next pattern symbol $c_2$.
  This will be continued until we arrive at the first edge $(U, V)$
  such that $V$ is a type-1 node.
  Then, we trace back the chain of $\FLink$'s from $(U, V)$
  until getting back to the type-2 node $V''$
  whose out-going edge has the next symbol to retrieve.
  We set $U \leftarrow V''$ and continue with the next symbol.
  This will be continued until we traverse all symbols $c_j$ in $P_i$
  for increasing $j = 1, \ldots, |P_i|$ along the edge $e_i$,
  or find the first mismatching symbols.

  The correctness of the above algorithm follows from the fact that
  every symbol in label of the edge $e_i$ is retrieved from a type-2 node
  that is not branching, except for the first one retrieved from
  the type-1 node that is the origin of $e_i$.
  Since any type-2 node is not branching,
  we can traverse the edge $e_i$ with $P_i$
  iff the underlying label of $e_i$ is equal to $P_i$ for $1 \leq i \leq m-1$.
  The case of the last edge $e_{m}$ where the first mismatching symbols are found is analogous.
    
  To analyze the time complexity, we consider the number of applications of $\FLink$.
  For each $1 \leq i \leq m-1$, the number of applications of $\FLink$ is
  bounded by the length of the underlying label of edge $e_i$, which is $|P_i|$.
  This is because each time we follow a $\FLink$, at least one new symbol is retrieved.
  Hence we can traverse $P_1 \cdots P_{m-1}$ in $O(|P_1 \cdots P_{m-1}| \log \sigma)$ time.
  For the last fragment $P_m$,
  we consider the number of applications of $\FLink$ until we find
  the type-2 node $X$ whose out-going edge has the first mismatching symbol.
  Since the first application of $\FLink$ for $P_m$
  begins with an edge whose destination has string depth $|P_1 \cdots P_{m-1}|$
  and since each symbol appearing in $T$ is represented by a node as a child of the root,
  the number of applications of $\FLink$ until finding $X$ is bounded by $|P_1 \cdots P_{m-1}|$.
  Note that this is independent of the length of the edge $e_m$
  which can be much longer than $P_m$.
  After finding $X$, we can traverse $P_m$ as in the same way to previous $P_i$'s.
  Thus, we can traverse $P_m$ in $O(|P_1 \cdots P_{m}| \log \sigma)$.
  Overall, it takes $O(|P_1 \cdots P_m| \log \sigma)$ time
  to traverse $P' = P_1 \cdots P_m$. This completes the proof.
\end{proof}
Algorithm~\ref{alg:patmatch} in Appendix
shows a pseudo-code of our pattern matching algorithm with the LST in Lemma~\ref{lem:patmatch}.

\section{Right-to-left online algorithm}

In this section, we present an online algorithm that constructs $\LST(T)$
by reading $T$ from right to left.
Let $\Tree_{i} = \LST(T[i:])$ for $1 \leq i \leq n$.
Our algorithm constructs $\Tree_{i}$ from $\Tree_{i+1}$ incrementally when $c = T[i]$ is read.
For simplicity, we assume that $T$ ends with a unique terminal symbol $\texttt{\$}$
such that $T[i] \ne \texttt{\$}$ for $1 \le i < n$.

We remark that the algorithm does not construct fast links of the LSTs.
The fast links can easily be constructed in $O(n)$ time after $\LST(T)$ has been constructed.

\begin{figure}[t]
        \centering
	\fbox{\includegraphics[scale=0.4]{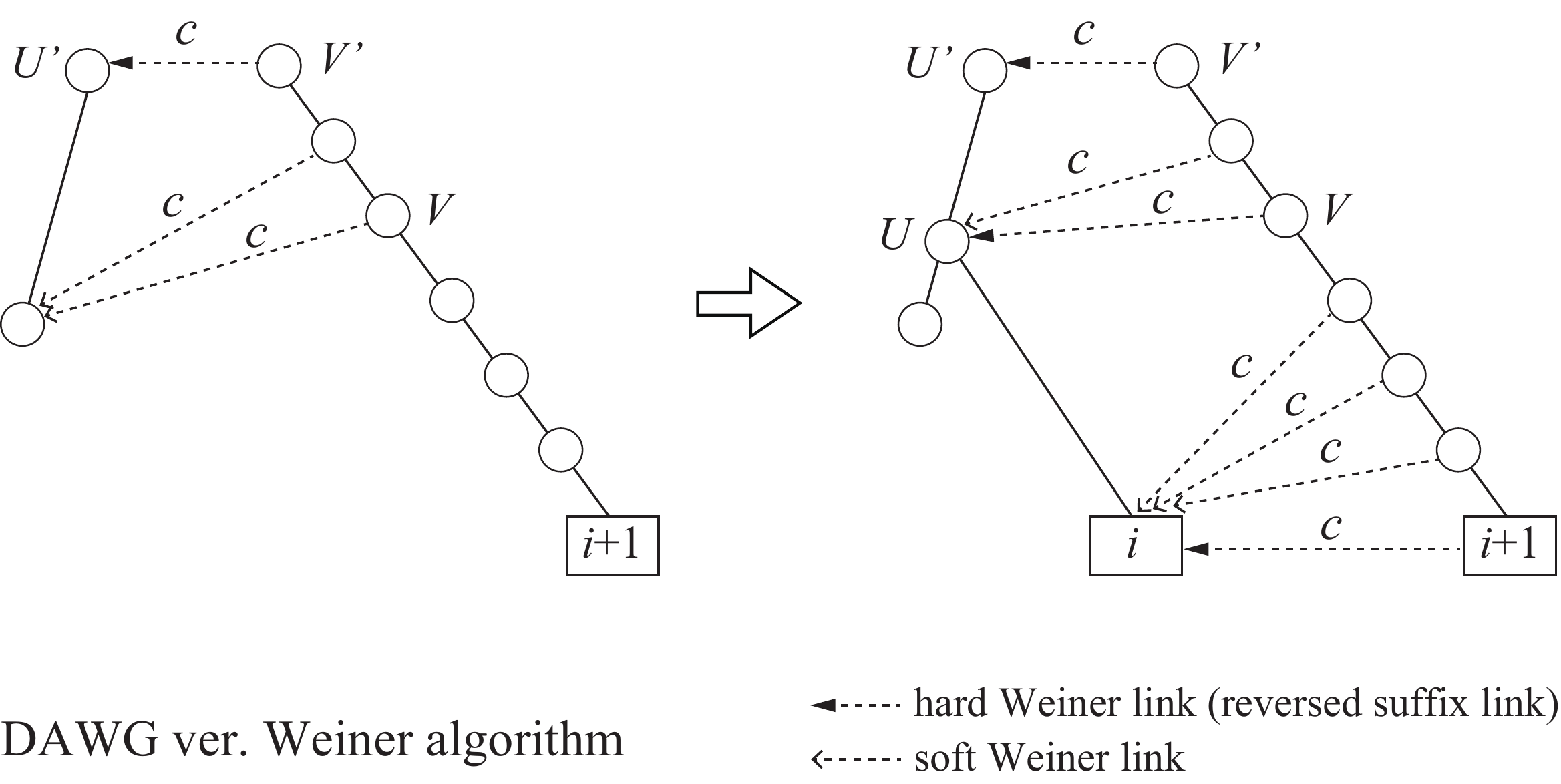}}
        \fbox{\includegraphics[scale=0.4]{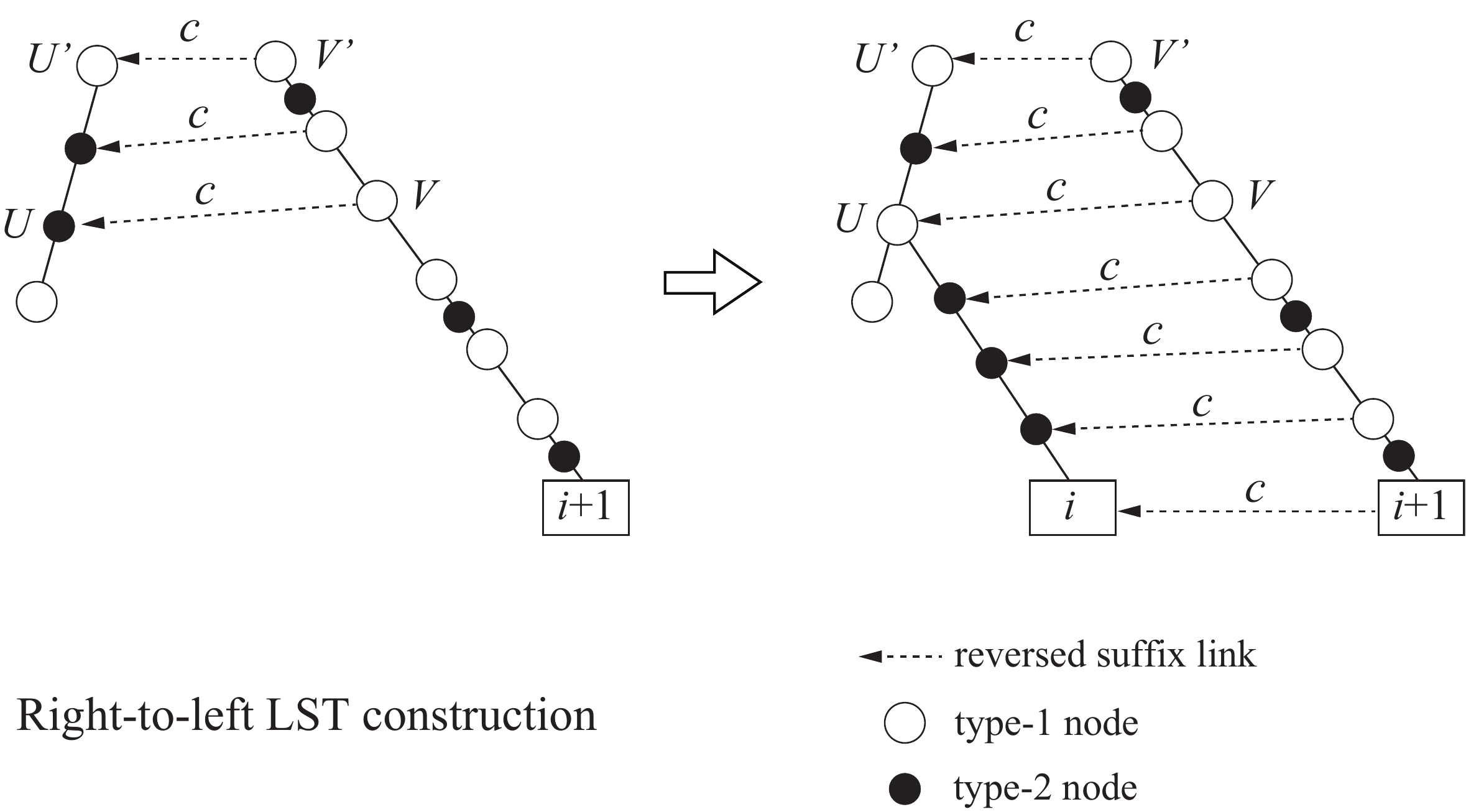}}
	\caption{
		Upper: The DAWG version of Weiner's algorithm when updating the suffix tree for $T[i+1:]$ to the suffix tree for $T[i:]$. Lower: Our right-to-left LST construction when updating $\Tree_{i+1} = \LST(T[i+1:])$ to $\Tree_{i} = \LST(T[i:])$.
	}
	\label{fig:Weiner_LST}
\end{figure}

Let us first recall Weiner's suffix tree contraction algorithm
on which our right-to-left LST construction algorithm is based.
Weiner's algorithm uses the reversed suffix links of the suffix tree
called \emph{hard Weiner links}.
We in particular consider the version of Weiner's algorithm
that also explicitly maintains \emph{soft-Weiner links}~\cite{BreslauerI13} of the suffix tree.
In the suffix tree of a text $T$,
there is a soft-Weiner link for a node $V$ with a symbol $c$
iff $cV$ is a substring of $T$ but $cV$ is not a node in the suffix tree.
It is known that the hard-Weiner links
and the soft-Weiner links are respectively equivalent to
the primary edges and the secondary edges of the
\emph{directed acyclic word graph} (\emph{DAWG})
for the reversal of the input string~\cite{Blumer1985}.

Given the suffix tree for $T[i+1:]$,
Weiner's algorithm walks up from the leaf representing $T[i+1:]$
and first finds the nearest branching ancestor $V$ such that
$aV$ is a substring of $T[i+1:]$,
and then finds the nearest branching ancestor $V'$ such that $cV' = U'$
is also a branching node, where $c = T[i]$.
Then, Weiner's algorithm finds the insertion point for a new leaf for $T[i:]$
by following the reversed suffix link (i.e. the hard-Weiner link) from $V'$ to $U'$,
and then walking down the corresponding out-edge of $U'$ with
the difference of the string depths of $V$ and $V'$.
A new branching node $U$ is made at the insertion point
if necessary.
New soft-Weiner links are created from the nodes between the leaf for $T[i+1:]$
and $V$ to the new leaf for $T[i:]$.

Now we consider our right-to-left LST construction.
See the lower diagram of Figure~\ref{fig:Weiner_LST} for illustration.
The major difference between the DAWG version of Weiner's algorithm
and our LST construction is that
in our LST we explicitly create type-2 nodes which are 
the destinations of the soft-Weiner links.
Hence, in our linear-size suffix trie construction,
for every type-1 node between $V$ and the leaf for $T[i+1:]$,
we explicitly create a unique new type-2 node
on the path from the insertion point to the new leaf for $T[i:]$,
and connect them by the reversed suffix link labeled with $c$.
Also, we can directly access the insertion point $U$
by following the reversed suffix link of $V$,
since $U$ is already a type-2 node before the update.

The above observation also gives rise to the number of type-2 nodes in the LST.
Blumer et al.~\cite{Blumer1985} proved that
the number of secondary edges in the DAWG of any string of length $n$
is at most $n-1$.
Hence we have:
\begin{lemma}
  The number of type-2 nodes in the LST of any string of length $n$
  is at most $n-1$.
\end{lemma}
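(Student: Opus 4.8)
The plan is to derive the bound on the number of type-2 nodes directly from the known bound on secondary edges in the DAWG, by exhibiting an injection from the set of type-2 nodes of $\LST(T)$ into the set of secondary edges of the DAWG of $T$ reversed (equivalently, soft-Weiner links of $\STree(T)$). The excerpt has already set up the essential correspondence: a type-2 node is, by definition, a node of $\STrie(T)$ whose suffix link points to a type-1 node, and type-1 nodes are exactly the nodes of $\STree(T)$. So if $W$ is a type-2 node and $V = \Slink(W) = W[2:|W|]$, then $V$ is a node of $\STree(T)$, $W = cV$ for $c = W[1]$ is a substring of $T$, and $W$ is not a node of $\STree(T)$ (since it is not type-1). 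This is precisely the condition under which there is a soft-Weiner link for $V$ with symbol $c$.

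First I would make this map explicit: send each type-2 node $W$ to the pair $(\Slink(W), W[1])$, i.e. to the soft-Weiner link of $V = \Slink(W)$ with label $c = W[1]$. I would check the map is well-defined: $W$ type-2 forces $V$ to be a node of $\STree(T)$, $cV = W$ to be a substring of $T$ (it is a path label in $\STrie(T)$), and $cV = W$ to not be a node of $\STree(T)$ — exactly a soft-Weiner link. Next I would check injectivity: if two type-2 nodes $W_1, W_2$ map to the same link $(V,c)$, then $W_1 = cV = W_2$ as strings, hence as nodes. Then, invoking the result of Blumer et al.\ that the number of secondary edges of the DAWG of a string of length $n$ is at most $n-1$, together with the stated equivalence between soft-Weiner links of $\STree(T)$ and secondary edges of the DAWG of the reversal of $T$, we conclude the number of type-2 nodes is at most $n-1$.

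The only point that needs a little care — and the main (small) obstacle — is matching up conventions on reversal. The paragraph preceding the lemma states that soft-Weiner links of $\STree(T)$ correspond to secondary edges of the DAWG of the reversal of the input string, while Blumer et al.'s bound is on the DAWG of an arbitrary string of length $n$; since the reversal of $T$ also has length $n$, the bound $n-1$ transfers verbatim, so this is purely bookkeeping. I would also note in passing that the incremental picture in Figure~\ref{fig:Weiner_LST} gives the same count operationally: each update step creates exactly one new type-2 node per type-1 node strictly between $V$ and the leaf for $T[i+1:]$, and these are in bijection with the newly created soft-Weiner links at that step, so summing over all $n$ steps reproduces the bound — but the clean argument is the static injection above, so that is what I would present as the proof.
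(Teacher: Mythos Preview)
Your proposal is correct and takes essentially the same approach as the paper: the paper's proof is just the sentence ``Blumer et al.\ proved that the number of secondary edges in the DAWG of any string of length $n$ is at most $n-1$; hence we have [the lemma],'' relying on the preceding paragraph's identification of type-2 nodes with soft-Weiner links (equivalently, DAWG secondary edges for the reversed string). You have simply spelled out that identification as an explicit injection and verified it is well-defined and injective, which is a cleaner write-up of the same argument.
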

The original version of Weiner's
suffix tree construction algorithm only maintains a Boolean value
indicating whether there is a soft-Weiner link from each node with each symbol.
We note also that the number of pairs of nodes and symbols for which
the indicators are true is the same as the number of soft-Weiner links
(and hence the DAWG secondary edges).

We have seen that LSTs can be seen as a representation of Weiner's suffix trees
or the DAWGs for the reversed strings.
Another crucial point is that Weiner's algorithm only needs to read
the first symbols of edge labels.
This enables us to easily extend Weiner's suffix tree algorithm
to our right-to-left LST construction.
Below, we will give more detailed properties of LSTs and
our right-to-left construction algorithm.

Let us first observe relations between $\Tree_{i}$ and $\Tree_{i+1}$.
\begin{lemma}\label{lem:rltype2to1}
  Any non-leaf type-1 node $U$ in $\Tree_{i}$ exists in $\Tree_{i+1}$ as a type-1 or type-2 node.
\end{lemma}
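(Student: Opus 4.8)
The plan is to analyze what happens to a non-leaf type-1 node $U$ of $\Tree_i = \LST(T[i:])$ when we go back one step to $\Tree_{i+1} = \LST(T[i+1:])$, i.e.\ when we undo the insertion triggered by reading $c = T[i]$. First I would recall the characterization of type-1 nodes: $U$ is type-1 in $\LST(S)$ iff $U$ is a branching node of $\STrie(S)$, equivalently $U$ is a node of $\STree(S)$ (including non-branching suffix nodes). So it suffices to show that a non-leaf type-1 node $U$ of $\Tree_i$ is either a node of $\STree(T[i+1:])$ (hence type-1 in $\Tree_{i+1}$), or else is a type-2 node of $\Tree_{i+1}$ (hence its suffix link points to a type-1 node of $\Tree_{i+1}$ and it has a single child). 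Since $U$ is a non-leaf type-1 node of $\Tree_i$, $U$ is a substring of $T[i:]$ that either has at least two distinct right-extensions in $T[i:]$, or is itself a proper suffix of $T[i:]$ that also occurs earlier; in the latter ``$\$$-terminated'' setting we actually only need the branching case plus the node $T[i:]$ itself, but $T[i:]$ is a leaf, so $U$ is a branching node of $\STrie(T[i:])$.

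Next I would do the case analysis on whether the occurrences of $U$ witnessing its branching in $T[i:]$ all lie inside $T[i+1:]$. Case (a): $U$ already has two distinct right-extensions within $T[i+1:]$. Then $U$ is a branching node of $\STrie(T[i+1:])$, hence a type-1 node of $\Tree_{i+1}$, and we are done. Case (b): $U$ is branching in $T[i:]$ but not in $T[i+1:]$; then the new occurrence of $U$ as a prefix of $T[i:]$ (i.e.\ starting at position $i$) must supply one of the two distinct right-extensions. This means $U$ is a prefix of $T[i:]$, so $U = T[i:i+|U|-1]$ and $U = cW$ where $W = T[i+1:i+|U|-1]$ is a prefix of $T[i+1:]$. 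In $\STrie(T[i+1:])$ the string $U = cW$ is a non-branching internal node (it occurs, since $W$ occurs and the relevant context makes $cW$ a substring — here I need to argue $cW$ occurs in $T[i+1:]$; it does because $U$ occurs at position $i$ in $T[i:]$ and $|U| < n-i+1$ as $U$ is a non-leaf, so the occurrence of $U$ fits strictly inside... wait, it starts at $i$, so it does not lie in $T[i+1:]$). This is the subtle point: the occurrence of $U$ that makes it branch in $\Tree_i$ uses position $i$, which disappears in $\Tree_{i+1}$. So I must instead argue that $U$ is nonetheless a type-2 node of $\Tree_{i+1}$, i.e.\ that $\Slink(U) = W[2:] = U[2:|U|]$ is a type-1 node of $\Tree_{i+1}$ and $U$ has exactly one child in $\STrie(T[i+1:])$.

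For that, I would use the standard suffix-link/branching transfer: if $cX$ is a branching node of $\STrie(S)$ then $X$ is a branching node of $\STrie(S)$. Applying this inside $T[i:]$ to $U = cW$: since $U$ is branching in $\Tree_i$, $U[2:|U|]$ is branching in $\STrie(T[i:])$, so $U[2:|U|]$ is a type-1 node of $\Tree_i$. But $U[2:|U|] = T[i+1:i+|U|-1]$ is a prefix of $T[i+1:]$ of length $|U|-1 < |T[i+1:]|$, so every occurrence of $U[2:|U|]$ that witnesses its branching either lies in $T[i+1:]$ already, or — by the same one-step induction hypothesis we are proving — here is where I would instead invoke a cleaner route: $U[2:|U|]$ is a non-leaf type-1 node of $\Tree_i$ whose branching occurrences, I claim, all lie in $T[i+1:]$, because the occurrence of $U$ at position $i$ contributes the occurrence of $U[2:|U|]$ at position $i+1$, which is inside $T[i+1:]$. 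Hence $U[2:|U|]$ is branching in $\STrie(T[i+1:])$, i.e.\ type-1 in $\Tree_{i+1}$. Combined with the fact that in case (b) $U$ has exactly one child in $\STrie(T[i+1:])$ (it lost exactly the right-extension coming from position $i$, leaving one), $U$ satisfies the type-2 definition in $\Tree_{i+1}$, unless $U$ happens to still be branching in $T[i+1:]$, which is case (a). This covers all cases.

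The main obstacle is precisely tracking which occurrences of $U$ (and of $\Slink(U)$) survive the deletion of position $i$, and ruling out the degenerate possibility that $U$ becomes non-existent or a leaf in $\Tree_{i+1}$: I must use that $U$ is a \emph{non-leaf} node of $\Tree_i$, so $|U| < |T[i:]|$, which guarantees $U[2:|U|]$ is a proper prefix of $T[i+1:]$ and that $U$ still occurs (with at least one right-extension) inside $T[i+1:]$; the terminal-symbol assumption $T[j]\neq\$$ for $j<n$ ensures no spurious coincidences of suffixes. Once these occurrence-bookkeeping facts are pinned down, the two cases (a)/(b) give type-1/type-2 respectively, proving the lemma.
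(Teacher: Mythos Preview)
Your proposal is correct and follows essentially the same two-case argument as the paper: either $U$ retains two distinct right-extensions in $T[i+1:]$ (type-1), or one extension comes only from the occurrence of $U$ at position $i$, in which case one shows $U' = U[2:|U|]$ is branching in $T[i+1:]$ by shifting the witnessing occurrences one position to the right (making $U$ type-2). One small slip: having set $U = cW$, you have $\Slink(U) = W$, not $W[2:]$; your subsequent use of $U[2:|U|]$ is the correct object and the argument goes through.
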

\begin{proof}
  If there exist two distinct symbols $a, b \in \Sigma$
  such that $Ua, Ub$ are substrings of $T[i+1:]$,
  then clearly $U$ is a type-1 node in $\Tree_{i+1}$.
  Otherwise, then let $b$ be a unique symbol
  such that $Ub$ is a substring of $T[i+1:]$.
  This symbol $b$ exists since $U$ is not a leaf in $\Tree_{i}$.
  Also, since $U$ is a type-1 node in $\Tree_{i}$,
  there is a symbol $a \neq b$ such that
  $Ua$ is a substring of $T[i:]$.
  Note that in this case $Ua$ is a prefix of $T[i:]$
  and this is the unique occurrence of $Ua$ in $T[i:]$.
  Now, let $U' = U[2:]$.
  Then, $U'a$ is a prefix of $T[i+1:]$.
  Since $U'b$ is a substring of $T[i+1:]$,
  $U'$ is a type-1 node in $\Tree_{i+1}$
  and hence $U$ is a type-2 node in $\Tree_{i+1}$.
\end{proof}
As was described above,
only a single leaf is added to the tree when updating $\Tree_{i+1}$ to $\Tree_{i}$.
The type-2 node of $\Tree_{i}$ that becomes type-1 in $\Tree_{i}$
is the \emph{insertion point} of this new leaf.

\begin{lemma}\label{lem:rlnewbranch}
	Let $U$ be the longest prefix of $T[i:]$
	such that $U$ is a prefix of $T[j:]$ for some $j > i$.
	$U$ is a node in $\Tree_{i+1}$. 
\end{lemma}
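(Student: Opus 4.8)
The plan is to show that $U$, the longest prefix of $T[i:]$ that also occurs as a prefix of some proper suffix $T[j:]$ with $j>i$, is already a node of $\Tree_{i+1} = \LST(T[i+1:])$. First I would argue that $U$ corresponds to a branching node of the suffix trie $\STrie(T[i:])$ — equivalently, a type-1 node of $\Tree_i$ — namely the new internal node (or existing one) created at the insertion point when the new leaf for $T[i:]$ is added. Indeed, by maximality of $U$, the symbol following $U$ in $T[i:]$ (if any) differs from the symbol following the occurrence of $U$ at position $j$; if $U = T[i:]$ itself (so it has no following symbol), $U$ is still branching because the terminal $\$$ guarantees the leaf for $T[i:]$ is distinct from the node where the path $T[j:]$ continues. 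Hence $U$ is a type-1 node of $\Tree_i$, and since $U$ is not a leaf of $\Tree_i$ (it has the occurrence at position $j$ strictly inside a longer suffix), Lemma \ref{lem:rltype2to1} applies and tells us $U$ exists in $\Tree_{i+1}$ as a type-1 or type-2 node — in either case, as a \emph{node} of $\Tree_{i+1}$, which is exactly the claim.

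The key steps, in order, are: (1) observe that $U$ being the longest such prefix means it is where the new leaf for $T[i:]$ branches off from the existing tree, so $U$ is a (possibly newly created) type-1 node in $\Tree_i$; (2) note $U$ is not a leaf of $\Tree_i$ because $T[j:]$ extends beyond $U$ for some $j>i$, so $U$ is a non-leaf type-1 node; (3) invoke Lemma \ref{lem:rltype2to1} to conclude $U$ is present in $\Tree_{i+1}$ (as type-1 or type-2, hence as a node).

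Alternatively — and perhaps more directly, without routing through Lemma \ref{lem:rltype2to1} — I would argue straight from the occurrence at position $j$: since $U$ is a prefix of $T[j:]$ with $j>i$, the string $U$ occurs in $T[i+1:]$, and $U[2:|U|]$ occurs in $T[i+2:]$; one then checks that $U$ is either branching or a maximal-reach suffix point in $\STrie(T[i+1:])$, and that its suffix link target is a node, so $U$ qualifies as a type-1 or type-2 node of $\Tree_{i+1}$ by definition. The main obstacle, in either route, is handling the corner case where $U$ has no "mismatch" symbol in $T[i:]$ — i.e. when $U$ is a prefix of $T[j:]$ but the occurrence-based definition of branching is subtle — and making sure the argument that $U$ is genuinely branching (type-1) in $\Tree_i$, rather than merely implicit, is airtight; the terminal-symbol assumption on $T$ is what I would lean on to rule out degenerate non-branching situations.
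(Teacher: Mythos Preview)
Your primary approach is correct and is essentially the paper's own proof: the paper handles $U=\varepsilon$ as the root, then argues that $U$ occurs at both positions $i$ and $j$ with distinct following symbols, so $U$ is a (non-leaf) type-1 node of $\Tree_i$, and then invokes Lemma~\ref{lem:rltype2to1}. One small simplification: the corner case you worry about, $U = T[i:]$, cannot occur, since $U$ is also a prefix of the strictly shorter suffix $T[j:]$ with $j>i$, so $U$ is always a proper prefix of $T[i:]$ and the mismatching symbol $T[i+|U|]$ always exists.
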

\begin{proof}
	If $U=\varepsilon$ then $U$ is the root.
	Otherwise, since $U$ occurs twice or more in $T[i:]$ and $T[i:i+|U|] \ne T[j:j+|U|]$, 
	$U$ is a type-1 node in $\Tree_{i}$.
	 By Lemma~\ref{lem:rltype2to1}, $U$ is a node in $\Tree_{i+1}$.
\end{proof}

By \Cref{lem:rlnewbranch},
we can construct $\Tree_{i}$ by adding a branch on node $U$,
where $U$ is the longest prefix of $T[i:]$ such that $U$ is a prefix of $T[j:]$ for some $j > i$.
This node $U$ is the insertion point for $\Tree_{i}$.
The insertion point $U$ can be found by following the reversed suffix link
labeled by $c$ from the node $U[2:]$ i.e. $U = \Rlink(U[2:],c)$.
Since $U$ is the longest prefix of $T[i:]$ where $U[2:]$ occurs at least twice in $T[i+1:]$,
$U[2:]$ is the deepest ancestor of the leaf $T[i+1:]$ that has the reversed suffix link labeled by $c$.
Therefore, we can find $U$ by checking the reversed suffix links of
the ancestors of $T[i+1:]$ walking up from the leaf.
We call this leaf representing $T[i+1:]$ as the \emph{last leaf} of $\Tree_{i+1}$.

After we find the insertion point, 
we add some new nodes.
First, we consider the addition of new type-1 nodes.
\begin{lemma}\label{lem:rlnewtype1}
	There is at most one type-1 node $U$ in $\Tree_{i}$ such that 
	$U$ is a type-2 node in $\Tree_{i+1}$.
	If such a node $U$ exists, then $U$ is the insertion point of $\Tree_{i}$.
\end{lemma}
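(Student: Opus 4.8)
The plan is to prove the two assertions of Lemma~\ref{lem:rlnewtype1} separately, using the structural relationship between $\Tree_i$ and $\Tree_{i+1}$ established in Lemmas~\ref{lem:rltype2to1}--\ref{lem:rlnewbranch}. First I would note that, since only a single leaf is added when updating $\Tree_{i+1}$ to $\Tree_i$, the only node that can newly become branching in $\Tree_i$ is the node at the insertion point; any other internal node of $\Tree_i$ has the same set of outgoing first-symbols in both trees. Combined with Lemma~\ref{lem:rltype2to1}, which tells us every non-leaf type-1 node of $\Tree_i$ already occurs in $\Tree_{i+1}$ either as type-1 or as type-2, this essentially forces the candidate set of ``promoted'' nodes to be a singleton contained in $\{\text{insertion point}\}$.

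For the existence-and-identification part, I would argue directly about the insertion point $U$ defined in Lemma~\ref{lem:rlnewbranch} (the longest prefix of $T[i:]$ that is also a prefix of some $T[j:]$ with $j>i$). As shown there, $U$ is a node in $\Tree_{i+1}$; I would then determine whether it is type-1 or type-2 in $\Tree_{i+1}$. It is type-1 in $\Tree_{i+1}$ exactly when there are two distinct symbols $a,b$ with $Ua,Ub$ occurring in $T[i+1:]$ (mirroring the case analysis in the proof of Lemma~\ref{lem:rltype2to1}); otherwise $U$ has a unique extension $b$ in $T[i+1:]$, and since the new leaf adds the fresh extension $c' = T[i+|U|+1]$ with $c'\neq b$ (by maximality of $U$), $U$ becomes branching in $\Tree_i$. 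In that second case I need to confirm $U$ is genuinely type-2 in $\Tree_{i+1}$, i.e. $\Slink(U) = U[2:]$ is type-1 in $\Tree_{i+1}$ — but this is precisely the argument given at the end of the proof of Lemma~\ref{lem:rltype2to1}: $U[2:]$ has extension $b$ as a substring and extension $a$ as a prefix-of-$T[i+1:]$ occurrence, so it is branching. Hence whenever a node is promoted from type-2 to type-1, that node is $U$, the insertion point.

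For the uniqueness (``at most one'') part, suppose $W$ is type-1 in $\Tree_i$ and type-2 in $\Tree_{i+1}$. A type-2 node has exactly one child, so $W$ is non-branching in $\Tree_{i+1}$ but branching in $\Tree_i$; this change can only be caused by the newly inserted leaf for $T[i:]$, which means $W$ must be a prefix of $T[i:]$ and the insertion of the new leaf creates a second outgoing edge at $W$. The deepest prefix of $T[i:]$ that already existed as a node in $\Tree_{i+1}$ is exactly $U$ (any strictly longer prefix of $T[i:]$ occurs only once in $T[i:]$, hence is not a node of $\Tree_i$ other than on the newly created branch, and in particular is not a node of $\Tree_{i+1}$); and any strictly shorter prefix of $T[i:]$ that is a node of $\Tree_{i+1}$ already has its $T[i+1:]$-extension plus lies on the root-to-leaf path, so it does not gain a new child. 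Therefore $W=U$, giving uniqueness.

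The main obstacle I anticipate is being careful about the ``nodes that already existed'' bookkeeping: I must make sure that no previously \emph{implicit} node of $\STrie$ (one that is neither type-1 nor type-2 in $\Tree_{i+1}$) could suddenly appear as a type-1 node in $\Tree_i$ via the new leaf — this would violate ``at most one''. The resolution is that a new type-1 node of $\Tree_i$ is a branching node of $\STrie(T[i:])$, and by Lemma~\ref{lem:rltype2to1} every non-leaf type-1 node of $\Tree_i$ already exists in $\Tree_{i+1}$ as type-1 or type-2, so no implicit node can be promoted; the only remaining candidate is the insertion point. Once this is pinned down, the rest is the routine two-case substring analysis on extensions of $U$, which I would keep brief.
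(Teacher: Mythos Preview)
Your proposal is correct and follows essentially the same route as the paper: both arguments take an arbitrary node that is type-1 in $\Tree_i$ but type-2 in $\Tree_{i+1}$, observe that it must be branching in $\Tree_i$ yet non-branching in $\Tree_{i+1}$, and conclude that the new branch is caused by the freshly inserted suffix $T[i:]$, pinning the node down as the insertion point. Your write-up is considerably more detailed than the paper's very terse proof (which ends with ``Clearly, such a node is the only one which is the branching node''); in particular, your separate ``identification'' paragraph, analysing when the insertion point is actually promoted, goes beyond what the lemma requires but is harmless and clarifying.
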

\begin{proof}
	Assume there is a type-1 node $U$ in $\Tree_{i}$ such that $U$ is a type-2 node in $\Tree_{i+1}$.
	There are suffixes $UV$ and $UW$ such that $|V| > |W|$ and $V[1] \ne W[1]$.
	Since $U$ is a type-2 node in $\Tree_{i+1}$, $UV = T[i:]$ and $UW = T[j:]$ for some $j > i$.
	Clearly, such a node is the only one which is the branching node.
\end{proof}
From Lemma~\ref{lem:rlnewtype1},
we know that new type-1 node is added at the insertion point
if it is a type-2 node.
The only other new type-1 node is the new leaf representing $T[i:]$.

\begin{figure}[th]
	\centering
	\begin{minipage}[t]{0.49\hsize}
		\centering
		\includegraphics[scale=0.8]{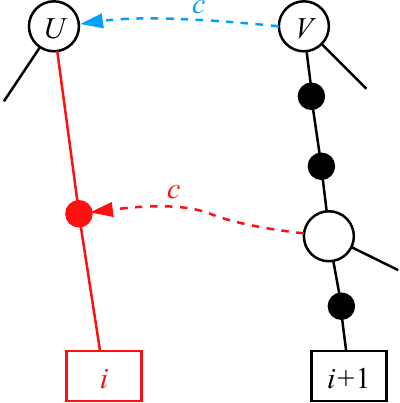}\\
		\ \ \ \small{(a)}
	\end{minipage}
	\begin{minipage}[t]{0.49\hsize}
		\centering
		\includegraphics[scale=0.8]{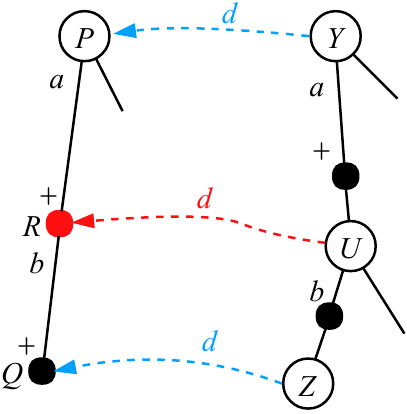}\\
		\ \ \ \small{(b)}
	\end{minipage}
	\caption{
		Illustration of (a) new branch addition and (b) type-2 nodes addition.
		The new nodes, edges, and reverse suffix link are colored red.
	}
	\label{fig:rtol_update}
%	\vspace{-4mm}
\end{figure}

Next, we consider the addition of the new branch from the insertion point.
By \Cref{lem:rlnewtype1}, there are no type-1 nodes between the insertion point and
the leaf for $T[i:]$ in $\Tree_{i}$.
Thus, any node $V$ in the new branch is a type-2 node
and this node is added if $V[2:]$ is a type-1 node.
This can be checked by ascending from leaf $T[i+1:]$ to $U[2:]$,
where $U$ is the insertion point.
Regarding the labels of the new branch,
for any new node $V$ and its parent $W$,
the label of $(W,V)$ edge is the same as the label of the first edge between $W[2:]$ and $V[2:]$.
The node $V$ is a $\Plus$-node if $V[2:]$ is a $\Plus$-node or there is a node between $W[2:]$ and $V[2:]$.
\Cref{fig:rtol_update} (a) shows an illustration of the branch addition:
$V$ can be found by traversing the ancestors of $i+1$ leaf.
After we find the insertion point $U = \Rlink(V,c)$,
we add a new leaf $i$ and type-2 nodes for each type-1 node between $i+1$ leaf and $V$.

Last, consider the addition of type-2 nodes when updating the insertion point $U$ to a type-1 node.
In this case, we add a type-2 node $dU$ for any $d \in \Sigma$ such that $dU$ occurs in $T[i:]$.
\begin{lemma}\label{lem:rlnewtype2}
 Let $U$ be the insertion point of $\Tree_{i}$.
 Consider the case where $U$ is a type-2 node in $\Tree_{i+1}$.
 Let $Z$ be the nearest type-1 descendant of $U$
 and $Y$ be the nearest type-1 ancestor of $U$ in $\Tree_{i+1}$.
 For any node $Q$ such that $Q = \Rlink(Z,d)$ for some $d \in \Sigma$,
 $P = \Rlink(Y,d)$ is the parent of $Q$ in $\Tree_{i+1}$
 and there is a type-2 node $R$ between $P$ and $Q$ in $\Tree_{i}$. 
\end{lemma}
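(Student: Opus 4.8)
First I would make sense of what the lemma is asserting concretely. We are in the case where the insertion point $U$ of $\Tree_i$ is a type-2 node in $\Tree_{i+1}$, which by Lemma~\ref{lem:rlnewtype1} means $U$ is the unique node that turns from type-2 into type-1 when we read $c = T[i]$. In $\Tree_{i+1}$, $U$ sits on the path between its nearest type-1 ancestor $Y$ and its nearest type-1 descendant $Z$. The reversed suffix links $\Rlink(\cdot,d)$ that currently exist in $\Tree_{i+1}$ on this path are exactly the soft-Weiner links (equivalently, DAWG secondary edges) that need to be ``promoted'' to genuine type-2 nodes in $\Tree_i$ because $U$ becomes branching. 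So the claim to prove is a ``local'' statement: whenever $Z$ has a reversed suffix link labeled $d$ to some node $Q = \Rlink(Z,d)$, then $Y$ also has one labeled $d$, say $P = \Rlink(Y,d)$; moreover $P$ is the parent of $Q$ in $\Tree_{i+1}$, and after the update there is a new type-2 node $R$ strictly between $P$ and $Q$.

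**Key steps in order.**

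\textbf{Step 1: Translate into substring statements.} Using the identification of nodes with path labels, $Q = \Rlink(Z,d)$ means $Q = dZ$ and $dZ$ is a node of $\Tree_{i+1}$ (type-1 or type-2), hence $dZ$ occurs in $T[i+1:]$. Since $Y$ is an ancestor of $Z$, $Y$ is a prefix of $Z$, so $dY$ is a prefix of $dZ$ and therefore $dY$ occurs in $T[i+1:]$ as well; moreover $dY$ is a node in $\STrie(T[i+1:])$. To conclude $P = \Rlink(Y,d) = dY$ is actually a node of $\Tree_{i+1}$, I would argue it is type-1: $Y$ is type-1 in $\Tree_{i+1}$, so $Y$ has two out-edges on distinct symbols $a \neq b$ with $Ya, Yb$ substrings of $T[i+1:]$; I need that the corresponding extensions $dYa, dYb$ both occur in $T[i+1:]$. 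This is where I should invoke the structure of the situation — $Y$ being the \emph{nearest} type-1 ancestor of $U$, and $U$ being type-2 (so $U[2:]$ is type-1 in $\Tree_{i+1}$), together with the fact that branching of $Y$ is ``inherited'' along reversed suffix links in a DAWG/suffix-tree — analogous to the standard fact that if $cW$ is a node and $W$ is branching then $cW$ is branching (the Weiner-link-of-a-branching-node argument). So $P = dY$ is a type-1 node of $\Tree_{i+1}$.

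\textbf{Step 2: $P$ is the parent of $Q$ in $\Tree_{i+1}$.} The path from $Y$ to $Z$ in $\Tree_{i+1}$ contains no type-1 node other than its endpoints (by choice of $Y$ as nearest type-1 ancestor and $Z$ as nearest type-1 descendant of $U$, and $U$ itself being type-2). Applying the reversed suffix link labeled $d$ to every node on the $Y$-to-$Z$ path gives a path in $\STrie(T[i+1:])$ from $dY = P$ to $dZ = Q$; I need that the only \emph{LST-nodes} of $\Tree_{i+1}$ on this image path are $P$ and $Q$ themselves. A node on this image path has the form $dW$ with $W$ strictly between $Y$ and $Z$. Such $W$ is an implicit node of $\Tree_{i+1}$ (neither type-1 nor type-2), meaning $W$ is not in $\STree(T[i+1:])$ and $W[2:] = \Slink(W)$ is not type-1. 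I claim $dW$ is likewise implicit: $dW$ is not branching and does not end the unique occurrence argument (same non-branching propagation as Step 1, read the other way), and $\Slink(dW) = W$ which is not type-1, so $dW$ is not type-2 either. Hence no LST-node of $\Tree_{i+1}$ lies strictly between $P$ and $Q$, i.e. $P = \Parent(Q)$ in $\Tree_{i+1}$.

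\textbf{Step 3: the new type-2 node $R$ appears in $\Tree_i$.} When we read $c$, the node $U$ becomes type-1 in $\Tree_i$. Then $dU$ (for $d$ ranging over symbols with $dU$ occurring in $T[i:]$) has a suffix link to the type-1 node $U$, and $dU$ is not itself type-1 (there is at most one new branching node, the insertion point $U$, by Lemma~\ref{lem:rlnewtype1}), so $dU$ is a type-2 node in $\Tree_i$ by definition. Now observe $U$ lies strictly between $Y$ and $Z$ on the $\Tree_{i+1}$ path, hence $dU$ lies strictly between $P = dY$ and $Q = dZ$ on the corresponding implicit path; and $dU$ occurs in $T[i:]$ precisely because $dZ$ does (as $Q = \Rlink(Z,d)$ survives, $dZ$ still occurs, and $dU$ is a prefix of $dZ$). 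Taking $R = dU$ gives the required type-2 node of $\Tree_i$ strictly between $P$ and $Q$. I would also note that $P$ and $Q$ themselves remain in $\Tree_i$ (adding one leaf and the new type-2 nodes does not destroy existing nodes), so ``between $P$ and $Q$ in $\Tree_i$'' is well-posed.

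**Main obstacle.**

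The delicate point is Step 1 together with Step 2: establishing that branching and ``implicit-ness'' propagate correctly along the $d$-labeled reversed suffix links restricted to the segment between $Y$ and $Z$. Concretely, I need the two-way implication: $W$ strictly between $Y$ and $Z$ is implicit in $\Tree_{i+1}$ $\iff$ $dW$ is implicit in $\Tree_{i+1}$ — or at least the direction needed to rule out extra LST-nodes on the image path — while simultaneously $Y$ type-1 forces $dY$ type-1. The forward propagation (node $\Rightarrow$ node) is the classical suffix-tree/DAWG fact; the subtle direction is that an implicit node cannot map to an explicit one, which relies on the existence of the auxiliary node $\bot$ and the fact that the whole segment $Y \ldots Z$ passes through the single type-2 node $U$, so no ``spurious'' branching can be introduced in the middle. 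I would isolate this as the one nontrivial claim and prove it by the standard argument that a branching node $cX$ forces $X$ branching (count children via Weiner links), plus the symmetric statement for the occurrence-ending condition that defines suffix-tree nodes; the rest of the lemma then follows by bookkeeping.
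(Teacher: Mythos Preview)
Your overall strategy matches the paper's: show that no LST-node lies strictly between $P=dY$ and $Q=dZ$ by reducing to the absence of type-1 nodes strictly between $Y$ and $Z$, and then exhibit $R=dU$ as the new type-2 node in $\Tree_i$. Two points, however, need correcting.

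In Step~1 you try to argue that $P=dY$ is type-1. This is unnecessary and may be false: $Y$ can branch into $Ya$ and $Yb$ with $Z$ below $Ya$, while only $dYa$ (not $dYb$) occurs in $T[i+1:]$, so $dY$ need not be branching. All that is required is that $P=dY$ is a \emph{node} of $\Tree_{i+1}$, and this is immediate: $dY$ is a substring (a prefix of $dZ=Q$) and $\Slink(dY)=Y$ is type-1, so by definition $dY$ is at worst a type-2 node. The paper takes this for granted and moves on.

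In Step~2 your assertion ``such $W$ is an implicit node (neither type-1 nor type-2)'' is wrong: $U$ itself sits strictly between $Y$ and $Z$ and is type-2, and there may be other type-2 nodes on that segment. What is true, and what suffices, is that no such $W$ is \emph{type-1}. The paper's argument is the clean contrapositive: if some $dW$ with $|Y|<|W|<|Z|$ were an LST-node of $\Tree_{i+1}$, then $\Slink(dW)=W$ would be type-1 (by the definition of type-2 when $dW$ is type-2, and because suffix-tree nodes map to suffix-tree nodes when $dW$ is type-1), contradicting the choice of $Y$ and $Z$. This single observation resolves what you flag as the ``main obstacle''; no two-way implicit-preservation statement is needed.

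Your Step~3 is correct and agrees with the paper: $dU$ occurs (as a prefix of $dZ$), is not an LST-node of $\Tree_{i+1}$ by the argument above with $W=U$, and becomes a type-2 node in $\Tree_i$ because $U$ becomes type-1.
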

\begin{proof}
	First, we prove that $P$ is the parent of $Q$ in $\Tree_{i+1}$.
	Assume on the contrary that $P$ is not the parent of $Q$.
	Then, there is a node $Q[:j] = dZ[:j-1]$ for some $|P| < j < |Q|$.
	Thus, $Z[:j-1]$ is a type-1 ancestor of $Z$ and a type-1 descendant of $Y$,
	however this contradicts the definition of $Z$ or $Y$.
	
	Second, we prove that there is a type-2 node between $P$ and $Q$ in $\Tree_{i}$.
	Since $U$ is a type-2 node in $\Tree_{i+1}$ and $Q = dZ$ is a node in $\Tree_{i+1}$,
	$dU$ occurs in $T[i+1:]$ but is not a node in $\Tree_{i+1}$.
	Since $U$ is a type-1 node in $\Tree_{i}$,
	$dU$ is a type-2 node $\Tree_{i}$.
\end{proof}

See \Cref{fig:rtol_update} (b) for an illustration of type-2 nodes addition.
It follows from Lemma~\ref{lem:rlnewtype2} that we can find the position of new type-2 nodes
by first following the reversed suffix link of the nearest type-1 descendant
$Z$ of $U$ in $\Tree_{i+1}$.
Then, we obtain the parent $P$ of $Z$,
and obtain $Y$ by following the suffix link of $P$.
The string depth of a new type-2 node $R$ equal to the string depth of $U$ plus one.
We can determine whether $R$ is a $\Plus$-node
using the difference of the string depths of $Y$ and $U$.
By Lemma~\ref{lem:rltype2to1},
the total number of type-2 nodes added this way
for all positions $1 \leq i \leq n$ is bounded
by the number of type-1 and type-2 nodes in $\Tree_{n}$ for the whole text $T$.

Algorithm~\ref{alg:rtol} in Appendix
shows a pseudo-code of our right-to-left linear-size suffix trie construction algorithm.
For each symbol $c = T[i]$ read,
the algorithm finds the deepest node $U$ in the path from the root to the last leaf
for $T[i+1:]$ for which $\Rlink(U,c)$ is defined,
by walking up from the last leaf (line \ref{line:rtol_findbranching}).
If the insertion point $\nextNode = \Rlink(U,c)$ is a type-1 node,
the algorithm creates a new branch.
Otherwise (if $\nextNode$ is a type-2 node),
then the algorithm updates $\nextNode$ to type-1 and adds a new branch.
The branch addition is done in lines~\ref{line:rtol_addbranch_start}--\ref{line:rtol_addbranch_end}.

Also, the algorithm adds nodes $R$ such that $R = \Rlink(\nextNode,d)$ for some $d \in \Sigma$ in $\Tree_{i}$.
The algorithm finds the locations of these nodes by checking the reversed suffix links of the nearest type-1 ancestor and descendant of $\nextNode$ by using $\CreateTypeTwo(\nextNode)$.
Let $Y$ be the nearest type-1 ancestor of $\nextNode$ and $Z$ be the nearest type-1 descendant of $\nextNode$.
For a symbol $d$ such that $\Rlink(Z,d)$ is defined,
let $P = \Rlink(Y,d)$ and $Q = \Rlink(Z,d)$:
the algorithm creates type-2 node $R$ and connects it to $P$ and $Q$.

A snapshot of right-to-left LST construction is shown in
Figure~\ref{fig:rtol_example} of Appendix.

We discuss the time complexity of our right-to-left online LST construction algorithm.
Basically, the analysis follows the amortization argument
for Weiner's suffix tree construction algorithm.
First, consider the cost for finding the insertion point for each $i$.
\begin{lemma}\label{lem:findbranchtime}
	Our algorithm finds the insertion point of $\Tree_{i}$ in $O(\log \sigma)$ amortized time.
\end{lemma}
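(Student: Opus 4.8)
<br>

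The plan is to follow the classical amortization argument for Weiner's suffix tree construction, adapted to the LST setting. The key observation is that finding the insertion point of $\Tree_i$ amounts to walking up from the last leaf (the leaf representing $T[i+1:]$) along the path to the root, and at each ancestor $W$ checking whether $\Rlink(W,c)$ is defined, where $c = T[i]$; we stop at the first (deepest) such $W$, which is the node $U[2:]$ with $U$ the insertion point. The per-step cost is $O(\log\sigma)$ because testing whether $\Rlink(W,c)$ exists requires a dictionary lookup on the out-links of $W$ indexed by symbols. So the total cost for step $i$ is $O(k_i \log\sigma)$, where $k_i$ is the number of nodes visited during the walk-up, and the goal is to show $\sum_{i=1}^n k_i = O(n)$.

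First I would set up the potential function. The natural choice, mirroring Weiner's analysis, is the string depth of the last leaf — more precisely, one should track the \emph{node-depth} (number of LST nodes on the root-to-leaf path) of the current last leaf, or equivalently relate it to the string depth. When we walk up from the last leaf for $T[i+1:]$ by $k_i$ nodes to reach $U[2:]$, and then follow the reversed suffix link and descend, the new last leaf for $T[i:]$ sits at node-depth roughly (node-depth of $U[2:]$) $- (k_i - 1) + (\text{small constant})$ relative to the old one: walking up $k_i$ steps decreases the depth by $k_i$, taking the reversed suffix link to $U$ keeps the depth essentially unchanged (the string gains one character but the path structure is preserved up to the insertion point), and then we add $O(1)$ new nodes (the new leaf, and possibly promoting $U$ and inserting a couple of nodes on the new branch). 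Thus each walk-up step pays for a unit decrease in potential, while each update step increases the potential by only $O(1)$.

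Next I would make the bookkeeping precise. Define $\Phi_i$ to be the node-depth of the last leaf in $\Tree_i$, so $\Phi_i \ge 0$ always and $\Phi_{n}$ (the first tree we build, for $T[n:] = \Terminal$) is $O(1)$. The recurrence to establish is $\Phi_i \le \Phi_{i+1} - (k_i - 1) + O(1)$, i.e. $k_i \le \Phi_{i+1} - \Phi_i + O(1)$. Summing over $i = 1, \ldots, n-1$ telescopes to $\sum_i k_i \le \Phi_{n} - \Phi_1 + O(n) = O(n)$. Multiplying by the $O(\log\sigma)$ per-step cost yields the claimed $O(\log\sigma)$ amortized time per insertion point. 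I would also need to note that actually \emph{following} the reversed suffix link from $U[2:]$ and creating/locating the insertion point $U$ is $O(\log\sigma)$ worst case (a single dictionary operation), so it does not affect the amortized bound; the earlier discussion already establishes that $U$ is present as a type-2 node before the update (Lemma~\ref{lem:rltype2to1} and the remarks following it), so $U = \Rlink(U[2:],c)$ is directly accessible.

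The main obstacle I anticipate is justifying the depth recurrence $\Phi_i \le \Phi_{i+1} - (k_i-1) + O(1)$ rigorously — in particular, arguing that taking the reversed suffix link does not blow up the node-depth. In a plain suffix trie this is immediate since suffix links decrease string depth by exactly one; but in the LST, the path from the root to $U$ may compress differently than the path to $U[2:]$, because $\Plus$-nodes and type-2 nodes can appear or disappear under the suffix-link correspondence. The careful point is that the nodes strictly above the insertion point $U$ are in bijection (via suffix links) with nodes above $U[2:]$ up to at most an $O(1)$ discrepancy caused by the newly promoted node and the handful of new type-2 nodes created by $\CreateTypeTwo$; Lemma~\ref{lem:rltype2to1} and Lemma~\ref{lem:rlnewtype2} are exactly what bound these discrepancies. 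Once that structural correspondence is pinned down, the telescoping is routine.
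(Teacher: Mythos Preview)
Your approach is the paper's: both amortize the walk-up cost via a telescoping node-depth argument in the style of Weiner's analysis, and both charge $O(\log\sigma)$ per $\Rlink$ lookup. The paper states it tersely as $\sum_i(\Depth(L_{i+1})-\Depth(U_i)+1)\le 2n$ without spelling out the potential.

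One correction is needed, though. Your claim that the new branch carries only ``$O(1)$ new nodes'' is false: the branch from the insertion point $U$ down to the new leaf $L_i$ receives one new type-2 node for \emph{every} type-1 node encountered during the walk-up from $L_{i+1}$ toward $U[2:]$ (this is precisely how the algorithm populates that branch; see the lower diagram of Figure~\ref{fig:Weiner_LST} and lines~\ref{line:rtol_addbranch_start}--\ref{line:rtol_addbranch_end} of Algorithm~\ref{alg:rtol}). Hence $\Depth(L_i)$ can exceed $\Depth(U)$ by as much as the walk-up length, and your recurrence $\Phi_i\le\Phi_{i+1}-(k_i-1)+O(1)$ fails as written. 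Two easy fixes: (i) take $\Phi_i=\Depth(U_i)$, the depth of the insertion point rather than the leaf---the algorithm actually restarts the next walk-up from $U_i$ (line~4), and $\Depth(U_i)\le\Depth(U_i[2:])+1$ because suffix links injectively map ancestors of $U_i$ to ancestors of $U_i[2:]$, so the telescoping is clean; or (ii) keep $\Phi_i=\Depth(L_i)$, add a term $m_i$ counting the new type-2 nodes on the branch, and invoke Lemma~3 ($\sum_i m_i\le n-1$) to absorb the extra sum. Either route yields $\sum_i k_i=O(n)$.
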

\begin{proof}
	For each iteration, the number of type-1 and type-2 nodes we visit
        from the last leaf to find the insertion point
        is at most $\Depth(L_{i+1})-\Depth(U_i)+1$,
        where $L_{i+1}$ is the leaf representing $T[i+1:]$
        and $U_i$ is the insertion point for the new leaf representing $T[i:]$
        in $\Tree_{i}$, respectively,
        and $\Depth(X)$ denotes the depth of any node $X$ in $\Tree_{i}$.
        See also the lower diagram of Figure~\ref{fig:Weiner_LST} for illustration.
	Therefore, the total number of nodes visited
        is $\sum_{1\le i < n} \Depth(L_{i+1})-\Depth(U_i)+1 \le 2n$.
	Since finding each reversed suffix link takes $O(\log \sigma)$ time,
        the total cost for finding the insertion points for all $1 \leq i \leq n$
        is $O(n \log \sigma)$, which is amortized to $O(\log \sigma)$ per iteration.
\end{proof}

Last, the computation time of a new branch addition in each iteration is as follows.
\begin{lemma}\label{lem:addbranchtime}
  Our algorithm adds a new leaf and new type-2 nodes
  between the insertion point and the new leaf in
  $\Tree_{i}$ in $O(\log \sigma)$ amortized time.
\end{lemma}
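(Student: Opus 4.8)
The plan is to mirror the amortized analysis already carried out for finding the insertion point in Lemma~\ref{lem:findbranchtime}, and combine it with the structural facts established in Lemmas~\ref{lem:rlnewtype1} and~\ref{lem:rlnewtype2}. The work done in one iteration when updating $\Tree_{i+1}$ to $\Tree_{i}$ splits into two parts: (i) walking up from the last leaf $L_{i+1}$ to the node $V$ (the nearest ancestor with a $c$-labeled reversed suffix link gives the insertion point $U$, and $V = U[2:]$ is encountered along the way), creating one new type-2 node on the new branch for each type-1 node passed between $V$ and $L_{i+1}$; and (ii) if $U$ was a type-2 node that is being promoted to type-1, running $\CreateTypeTwo(U)$, which follows the reversed suffix links of the nearest type-1 descendant $Z$ and nearest type-1 ancestor $Y$ of $U$ to create the new type-2 nodes $R = \Rlink(U,d)$.

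First I would handle part (i). By Lemma~\ref{lem:rlnewtype1} there are no type-1 nodes strictly between the insertion point and the new leaf in $\Tree_{i}$, so every node created on the new branch is a type-2 node, and by the reasoning preceding the lemma each such node $V$ corresponds bijectively to a type-1 node $V[2:]$ lying between $V'$ (the node reached by the reversed suffix link) and $L_{i+1}$ in $\Tree_{i+1}$. Hence the number of nodes created on the branch in iteration $i$ is bounded by the number of type-1 nodes on the path from $L_{i+1}$ up to $V$, which is at most the number of nodes visited during the insertion-point search, namely $\Depth(L_{i+1}) - \Depth(U_i) + 1$ in the notation of Lemma~\ref{lem:findbranchtime}. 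Summing over $i$ telescopes to $O(n)$, exactly as in that lemma; multiplying by the $O(\log\sigma)$ cost of each reversed-suffix-link lookup and each $\Child$/$\Plus$ query gives $O(n\log\sigma)$ total, i.e. $O(\log\sigma)$ amortized per iteration. The labels and $\Plus$-flags of the new nodes are determined locally from the corresponding edges one suffix link down, each in $O(\log\sigma)$ time, so this does not affect the bound.

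Next I would handle part (ii), the $\CreateTypeTwo$ step. Each new type-2 node $R$ created here is, by Lemma~\ref{lem:rlnewtype2}, of the form $\Rlink(U,d)$ and sits between $P = \Rlink(Y,d)$ and $Q = \Rlink(Z,d)$, and $U$ is now type-1 in $\Tree_i$; crucially $R$ did not exist in $\Tree_{i+1}$. So each such creation charges one unit to a node that is genuinely new to the LST. Since the LST of the whole text $T$ has $O(n)$ type-1 nodes (suffix-tree nodes) and, by the bound preceding Lemma~\ref{lem:rlnewbranch}, at most $n-1$ type-2 nodes, the total number of type-2 nodes ever created by $\CreateTypeTwo$ over all iterations is $O(n)$; together with a constant amount of pointer-chasing per creation, each at $O(\log\sigma)$ cost, this contributes $O(n\log\sigma)$ overall. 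Here one must also note that $\CreateTypeTwo$ examines $Z$'s reversed suffix links symbol by symbol, but the number of those equals the out-degree relevant at $Z$ and each inspected symbol either yields a created node or is amortized against the creation; I would make this charging explicit so that no $d$ is inspected without being paid for. The combination of (i) and (ii) yields the claimed $O(\log\sigma)$ amortized time, and adding the $O(\log\sigma)$ amortized cost of Lemma~\ref{lem:findbranchtime} gives the overall $O(n\log\sigma)$ running time of the algorithm.

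The main obstacle I anticipate is part (ii): unlike the branch additions of part (i), whose cost telescopes against the depths of successive last leaves, the $\CreateTypeTwo$ work is not obviously linked to a telescoping quantity, so the argument must instead be a global counting argument — each type-2 node is created exactly once over the whole run — and I must be careful that the iteration over $d$ inside $\CreateTypeTwo$ (and the check that $U$ is indeed being promoted) does not hide extra work that is not charged to a newly created node. Making that amortization watertight, rather than the routine $O(\log\sigma)$-per-operation bookkeeping, is where the care is needed.
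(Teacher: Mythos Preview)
Your part (i) is correct and is essentially the paper's own argument: each new type-2 node on the branch from the insertion point to the new leaf corresponds to a type-1 node on the walk-up path from $L_{i+1}$, so its creation cost is charged to the insertion-point search already amortized in Lemma~\ref{lem:findbranchtime}.

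Your part (ii), however, is outside the scope of this lemma. Read the statement again: it concerns only the new leaf and the type-2 nodes \emph{between the insertion point and the new leaf}, i.e.\ the nodes on the newly created branch. The type-2 nodes produced by $\CreateTypeTwo$ when the insertion point is promoted from type-2 to type-1 lie elsewhere in the tree (on edges $(P,Q)$ with $P=\Rlink(Y,d)$, $Q=\Rlink(Z,d)$), not on the new branch. The paper does not include that cost in Lemma~\ref{lem:addbranchtime}; it bounds it separately in the main text via Lemma~\ref{lem:rltype2to1}, observing that the total number of such nodes over all $i$ is at most the number of nodes in the final $\LST(T)$. So your global counting argument for $\CreateTypeTwo$ is fine, but it belongs to the proof of Theorem~\ref{theorem:rtollstconstructiontime}, not to this lemma. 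Restricting to part (i), your proof matches the paper's.
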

\begin{proof}
  Given the insertion point for $\Tree_{i}$,
  it is clear that we can insert a new leaf in $O(\log \sigma)$ time.
  For each new type-2 node in the path from the insertion point and the new leaf for $T[i:]$,
  there is a corresponding type-1 node in the path above the last leaf $T[i+1:]$
  (see also the lower diagram of Figure~\ref{fig:Weiner_LST}).
  Thus the cost for inserting all type-2 nodes can be charged to
  the cost for finding the insertion point for $\Tree_{i}$,
  which is amortized $O(\log \sigma)$ per a new type-2 node by Lemma~\ref{lem:findbranchtime}.  
\end{proof}

By Lemmas~\ref{lem:findbranchtime} and~\ref{lem:addbranchtime},
we get the following theorem:
\begin{theorem}\label{theorem:rtollstconstructiontime}
	Given a string $T$ of length $n$, our algorithm constructs $\LST(T)$ in $O(n \log \sigma)$ time and $O(n)$ space online, by reading $T$ from the right to the left.
\end{theorem}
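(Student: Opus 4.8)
The plan is to assemble the theorem from the per-iteration amortized bounds already established, with the only remaining work being a careful accounting of the auxiliary type-2 nodes created by $\CreateTypeTwo$ (the ones coming from Lemma~\ref{lem:rlnewtype2}), since these were not covered by Lemma~\ref{lem:addbranchtime}. First I would recall that the total work per symbol $c = T[i]$ splits into three parts: (i) walking up from the last leaf of $\Tree_{i+1}$ to locate the deepest ancestor with a $\Rlink$ on $c$ and thereby the insertion point $\nextNode$; (ii) creating the new leaf for $T[i:]$ together with the new type-2 nodes strung along the new branch, and possibly promoting $\nextNode$ from type-2 to type-1; and (iii) when such a promotion happens, invoking $\CreateTypeTwo(\nextNode)$ to add a type-2 node $R$ between $P = \Rlink(Y,d)$ and $Q = \Rlink(Z,d)$ for each symbol $d$ with $\Rlink(Z,d)$ defined. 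Parts (i) and (ii) are already bounded by $O(\log\sigma)$ amortized by Lemmas~\ref{lem:findbranchtime} and~\ref{lem:addbranchtime}, so the theorem reduces to bounding the cost of part (iii) over all $n$ iterations and then summing.

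For part (iii), the key observation — already flagged in the paragraph following Lemma~\ref{lem:rlnewtype2} — is that every type-2 node created by $\CreateTypeTwo$ across the entire run is a distinct node of the final tree $\Tree_1 = \LST(T)$: once created, such a node is never removed, and each call creates nodes at distinct locations (one per symbol $d$, at string depth $|\nextNode|+1$, hanging off distinct $P$'s). Hence the total number of nodes created this way, summed over all $i$, is at most the number of nodes of $\LST(T)$, which is $O(n)$ by the structural bound (suffix-tree nodes $O(n)$ plus type-2 nodes $\le n-1$ by Lemma~2). Each such creation costs $O(\log\sigma)$ for the $\Rlink$ lookups and pointer updates, and enumerating the relevant symbols $d$ is done by scanning the $\Rlink$ entries of $Z$, each of which is charged to a created node. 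Therefore part (iii) contributes $O(n\log\sigma)$ in total. Adding the $O(n\log\sigma)$ from parts (i) and (ii) gives $O(n\log\sigma)$ overall time. The space bound is immediate: the tree has $O(n)$ nodes, edges, and reversed suffix links, and the algorithm keeps only a pointer to the last leaf and a constant amount of auxiliary state, so $O(n)$ space suffices; in particular the input text is never retained.

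The one subtlety I would treat with care — and the part I expect to be the main obstacle — is making the charging argument in part (iii) fully rigorous, i.e. showing that the nodes produced by different $\CreateTypeTwo$ calls (and by different symbols $d$ within one call) really are pairwise distinct and really do persist to the end, so that the count is genuinely bounded by $|\LST(T)| = O(n)$ rather than merely by something like "once per type-1 node per iteration". Here I would lean on Lemma~\ref{lem:rltype2to1}: each new type-2 node $R = dU$ corresponds to the type-1 node $U = \nextNode$ being freshly promoted in $\Tree_i$, and since a node is promoted to type-1 at most once over the whole construction, the pair $(d, U)$ is never revisited; distinctness across symbols $d$ is clear since $dU \ne d'U$ for $d \ne d'$. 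Once this bijection-style bound is in place, the amortized analysis goes through and the three pieces combine to yield the claimed $O(n\log\sigma)$ time and $O(n)$ space, completing the proof.
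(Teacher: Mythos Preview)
Your proposal is correct and follows essentially the same approach as the paper: the paper simply states that the theorem follows from Lemmas~\ref{lem:findbranchtime} and~\ref{lem:addbranchtime}, with the bound on part~(iii) handled earlier in the text (the paragraph after Lemma~\ref{lem:rlnewtype2}) via exactly the observation you spell out---that the total number of type-2 nodes created by $\CreateTypeTwo$ is bounded by the size of the final $\LST(T)$, invoking Lemma~\ref{lem:rltype2to1}. Your write-up is in fact more careful than the paper's own treatment, particularly in making the distinctness/persistence of the $dU$ nodes explicit.
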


\section{Left-to-right online algorithm}

In this section, we present an algorithm that constructs the linear-size suffix trie of a text $T$
by reading the symbols of $T$ from the left to the right.
Our algorithm constructs a slightly-modified
data structure called the pre-LST defined as follows:
	The pre-LST $\PLST(T)$ of a string $T$ is a subgraph of $\STrie(T)$ consisting of two types of nodes,
	\begin{enumerate}
		\item Type-1: The root, branching nodes, and leaves of $\STrie(T)$.
		\item Type-2: The nodes of $\STrie(T)$ that are not type-1 nodes and their suffix links point to type-1 nodes.
	\end{enumerate}
The main difference between $\PLST(T)$ and $\LST(T)$ is the definition of type-1 nodes.
While $\LST(T)$ may contain non-branching type-1 nodes that correspond to 
non-branching internal nodes of $\STree(T)$ which represent repeating suffixes,
$\PLST(T)$ does not contain such type-1 nodes.
When $T$ ends with a unique terminal symbol $\$$,
the pre-LST and LST of $T$ coincide.

Our algorithm is based on Ukkonen's suffix tree construction algorithm~\cite{Ukkonen1995}.
For each prefix $T[:i]$ of $T$,
there is a unique position $k_i$ in $T[:i]$
such that $T[k_i:i]$ occurs twice or more in $T[:i-1]$
but $T[k_i-1:i]$ occurs exactly once in $T[:i]$.
In other words, $T[k_i-1:i]$ is the shortest suffix of $T[:i]$ that is represented
as a leaf in the current pre-LST $\PLST(T[:i])$,
and $T[k_i:i]$ is the longest suffix of $T[:i]$ that is
represented in the ``inside'' of $\PLST(T[:i])$.
The location of $\PLST(T[:i])$ representing the longest repeating suffix $T[k_i:i]$
of $T[:i]$ is called the \emph{active point}, as in the
Ukkonen's suffix tree construction algorithm.
We also call $k_i$ the \emph{active position} for $T[:i]$.
Our algorithm keeps track of the location for the active point (and the active position)
each time a new symbol $T[i]$ is read for increasing $i = 1, \ldots, n$.
We will show later that the active point can be
maintained in $O(\log \sigma)$ amortized time per iteration,
using a similar technique to our pattern matching algorithm on LSTs
in Lemma~\ref{lem:patmatch}.
In order to ``neglect'' extending the leaves that already exist in the current tree,
Ukkonen's suffix tree construction algorithm uses
the idea of \emph{open leaves} that do not explicitly maintain
the lengths of incoming edge labels of the leaves.
However, we cannot adapt this open leaves technique to construct pre-LST directly,
since we need to add type-2 node on the incoming edges of some leaves.
Fortunately, there is a nice property on the pre-LST so we can update it efficiently.
We will discuss the detail of this property later.
Below, we will give more detailed properties of pre-LSTs and
our left-to-right construction algorithm.

Let $\PTree_{i} = \PLST(T[:i])$ be the pre-LST of $T[:i]$.
Our algorithm constructs $\PTree_{i}$ from $\PTree_{i-1}$ incrementally when a new symbol $c = T[i]$ is read.

There are two kinds of leaves in $\PLST(T[:i])$,
the one that are $\Plus$-nodes and the other ones that are not $\Plus$-nodes.
There is a boundary in the suffix link chain of the leaves
that divides the leaves into the two groups, as follows:
\begin{lemma}\label{lem:leafpoint}
	Let $T[j:i]$ be a leaf of $\PTree_{i}$, for $1 \le j < k$.
	There is a position $l$ such that $T[j:i]$ is a $\Plus$-node for $1 \le j < l$
	and not a $\Plus$-node for $l \le j < k_i$.
\end{lemma}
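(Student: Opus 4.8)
The plan is to analyze how the ``$\Plus$-ness'' of a leaf behaves along the suffix-link chain $T[j:i], T[j+1:i], \ldots, T[k_i-1:i]$ and show this property is monotone: once a leaf stops being a $\Plus$-node, none of the shorter leaf-suffixes is a $\Plus$-node either. Equivalently, I would prove the contrapositive-style statement: if $T[j:i]$ is \emph{not} a $\Plus$-node (i.e.\ its parent in $\PTree_i$ is at string depth exactly $i-j$, one less than $T[j:i]$), then $T[j+1:i]$ is also not a $\Plus$-node. Chaining this implication from the deepest leaf down the suffix links gives the claimed threshold $l$: all leaves with $1 \le j < l$ are $\Plus$-nodes and all leaves with $l \le j < k_i$ are not, where $l$ is simply the smallest $j$ for which $T[j:i]$ fails to be a $\Plus$-node (and $l = k_i$ if there is no such $j$).

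First I would unwind the definitions. A leaf $L = T[j:i]$ is not a $\Plus$-node precisely when its parent $W$ in $\PTree_i$ is a node of $\STrie(T[:i])$ at depth $|L|-1$, i.e.\ $W = T[j:i-1]$ (the immediate predecessor of $L$ in the trie) is itself a type-1 or type-2 node of $\PTree_i$. So $L$ is not a $\Plus$-node iff $T[j:i-1]$ is a (type-1 or type-2) node of $\PTree_i$. Now I want to show: if $T[j:i-1]$ is a node of $\PTree_i$, then $T[j+1:i-1]$ is a node of $\PTree_i$, which would immediately give that $T[j+1:i]$ is not a $\Plus$-node. I would split by the type of $W = T[j:i-1]$. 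If $W$ is type-1, it is the root, a branching node, or a leaf of $\STrie(T[:i])$; $W$ cannot be a leaf (it has the child $T[j:i]$), and the root case is trivial, so $W$ is branching, meaning $Wb$ and $Wc$ are substrings of $T[:i]$ for distinct $b,c$. Then $W[2:] = T[j+1:i-1]$ also has both $b$ and $c$ following it, so it is branching, hence type-1, hence a node of $\PTree_i$. If $W$ is type-2, then by definition $\Slink(W) = W[2:] = T[j+1:i-1]$ is a type-1 node of $\PTree_i$, so again it is a node of $\PTree_i$. Either way $T[j+1:i-1]$ is a node of $\PTree_i$, completing the inductive step.

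The remaining wrinkle, which I expect to be the main obstacle, is the boundary behavior at the deepest leaf and the subtlety that the ``parent'' relation and node-set we are reasoning about are those of $\PTree_i$, not of the full suffix trie; in particular I must make sure the implication ``$T[j+1:i-1]$ is a node of $\PTree_i$ $\Rightarrow$ $T[j+1:i]$ is not a $\Plus$-node'' is airtight, which requires that $T[j+1:i]$ really is a leaf of $\PTree_i$ (true since $j+1 < k_i$ by hypothesis, so $T[j+1:i]$ occurs exactly once in $T[:i]$) and that its parent in $\PTree_i$ is the \emph{nearest} ancestor node, which is $T[j+1:i-1]$ once we know the latter is in $\PTree_i$ — there is no node strictly between them since they differ by one symbol. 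I would also handle the degenerate cases $l=1$ (every leaf is a $\Plus$-node) and $l=k_i$ (no leaf is a $\Plus$-node) by noting the statement is vacuous on the empty range. Once the one-step monotonicity implication is established cleanly, the lemma follows by taking $l := \min\{\, j : 1 \le j < k_i,\ T[j:i] \text{ is not a }\Plus\text{-node}\,\}$, with $l := k_i$ if the set is empty, and iterating the implication for all $j \ge l$.
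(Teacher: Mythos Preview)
Your proposal is correct and follows essentially the same argument as the paper: both reduce the claim to the monotonicity step ``if $T[j:i]$ is not a $\Plus$-node then $T[j+1:i]$ is not a $\Plus$-node'' via the observation that $T[j:i-1]$ being a node of $\PTree_i$ forces $T[j+1:i-1]$ to be one as well. The paper phrases this as a short proof by contradiction and dispatches the key implication with a terse ``by definition,'' whereas you spell it out with a type-1/type-2 case split; the added detail is fine but not a different route.
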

\begin{proof}
	Assume on the contrary there is a position $j$ such that $T[j:i]$ is not a $\Plus$-node
	and $T[j+1:i]$ is a $\Plus$ node.
	Since $T[j:i]$ is not a $\Plus$-node, $T[j:i-1]$ is a node.
	By definition, $T[j+1:i-1]$ is also a node.
	Thus $T[j+1:i]$ is not a $\Plus$-node, which is a contradiction.
\end{proof}
Intuitively, the leaves that are $\Plus$-nodes in $\PTree_{i}$
are the ones
that were created in the last step of the algorithm with the last read symbol $T[i]$.

\begin{figure}[t]
	\centering
	\includegraphics[scale=0.8]{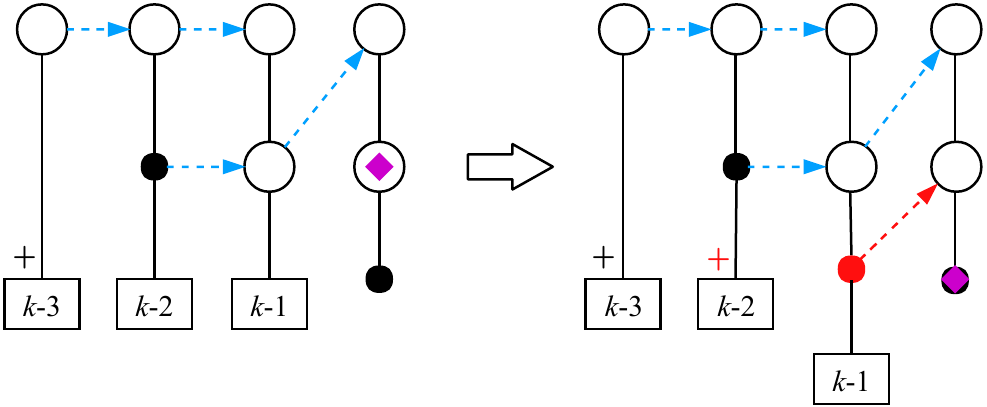}
	\caption{
		Illustration for updating the parts of $\PTree_{i-1}$ that correspond to $T[j:i-1]$ for $j < k_i$.
		The purple diamond shows the active point.
		The new $\Plus$ sign, node, and its suffix link are colored red.
	}
	\label{fig:ltor_updateleft}
\end{figure}

When updating $\PTree_{i-1}$ into $\PTree_{i}$,
the active position $k_{i-1}$ for $T[:i-1]$ divides the suffixes $T[j:i-1]$ into two parts,
the $j < k_{i-1}$ part and the $j \ge k_{i-1}$ part.
First, we consider updating the parts of $\PTree_{i-1}$ that correspond to $T[j:i-1]$ for $j < k_{i-1}$.
\begin{lemma}\label{lem:lropenleaf}
	For any leaf $T[j:i-1]$ of $\PTree_{i-1}$ with $j < k_{i-1} - 1$,
        $T[j:i-1]$ is implicit in $\PTree_{i}$.
\end{lemma}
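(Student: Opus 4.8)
The plan is to argue directly from the definitions of leaf and implicit node in $\PTree_i$ versus $\PTree_{i-1}$. Recall that $T[j:i-1]$ being a leaf of $\PTree_{i-1}$ with $j < k_{i-1}-1$ means $T[j:i-1]$ occurs exactly once in $T[:i-1]$, and moreover (since $j<k_{i-1}-1$, i.e.\ $j \le k_{i-1}-2$) we are strictly to the left of the boundary leaf $T[k_{i-1}-1:i-1]$, so $T[j+1:i-1]$ already occurs exactly once in $T[:i-1]$ as well and hence $T[j:i-1]$ is a non-$\Plus$ leaf whose suffix parent $T[j+1:i-1]$ is itself a leaf (type-1) node. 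First I would note that appending $c=T[i]$ extends this occurrence: $T[j:i]$ occurs in $T[:i]$ and, since $T[j:i-1]$ occurred exactly once, $T[j:i]$ also occurs exactly once, so $T[j:i]$ is again a leaf of $\STrie(T[:i])$. The crux is to show this leaf is \emph{not} a node of $\PTree_i$, i.e.\ it is neither type-1 nor type-2.

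It is not type-1: type-1 nodes of $\PTree_i$ are the root, branching nodes, and leaves of $\STrie(T[:i])$, but wait---$T[j:i]$ \emph{is} a leaf of $\STrie(T[:i])$. So the statement ``$T[j:i-1]$ is implicit in $\PTree_i$'' must be read with the understanding that $T[j:i-1]$, the old node, sits strictly inside the edge from the new branching/leaf structure; what becomes implicit is the \emph{string} $T[j:i-1]$ (the old leaf location), not $T[j:i]$. So I would restate the target as: the locus $T[j:i-1]$ is an implicit node of $\PTree_i$ (neither type-1 nor type-2 there). It is not type-1 in $\PTree_i$ because $T[j:i-1]$ is no longer a leaf of $\STrie(T[:i])$ (it has the child $T[j:i]$), and it is not branching (it has exactly that one child, since $T[j:i-1]$ occurred only once in $T[:i-1]$, hence $T[j:i-1]c'$ is a substring of $T[:i]$ only for $c'=c$), and it is not the root. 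So the only way it could still be a $\PLST$-node is as a type-2 node.

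To rule out type-2: $T[j:i-1]$ is type-2 in $\PTree_i$ iff its suffix link target $T[j+1:i-1]$ is a type-1 node of $\PTree_i$. By the same reasoning applied to $j+1$ (still $j+1 \le k_{i-1}-1$), $T[j+1:i-1]$ was a leaf of $\PTree_{i-1}$, occurred exactly once in $T[:i-1]$, and now has exactly one child $T[j+1:i]$ in $\STrie(T[:i])$; hence $T[j+1:i-1]$ is not branching, not a leaf of $\STrie(T[:i])$, and not the root, so it is not type-1 in $\PTree_i$. Therefore $T[j:i-1]$ fails the type-2 condition and is implicit in $\PTree_i$. I would also separately handle the boundary of the index range: the hypothesis $j < k_{i-1}-1$ is exactly what guarantees $j+1 < k_{i-1}$, so that $T[j+1:i-1]$ is also a (left-side) leaf and the argument for the suffix parent goes through; the case $j = k_{i-1}-1$ is excluded precisely because there $T[j+1:i-1]=T[k_{i-1}:i-1]$ is the active point, which may be branching.

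The main obstacle I anticipate is bookkeeping with the two meanings of $k_{i-1}$ versus $k_i$ and making sure the occurrence-count claims ("occurs exactly once in $T[:i-1]$'' vs.\ ``in $T[:i]$'') are airtight; in particular, one must confirm that extending a unique occurrence by one symbol cannot create a second occurrence and cannot make the locus branching. This follows because any occurrence of $T[j:i]$ in $T[:i]$ induces an occurrence of its prefix $T[j:i-1]$ in $T[:i-1]$, of which there is only one, forcing the new occurrence to be the suffix occurrence itself; and branching at $T[j:i-1]$ in $\STrie(T[:i])$ would require two distinct right-extensions of a string occurring once, which is impossible. Once these elementary facts are in place, the lemma follows by the dichotomy ``type-1 / type-2 / implicit'' applied to the locus $T[j:i-1]$ in $\PTree_i$.
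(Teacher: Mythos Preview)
Your argument is correct and follows essentially the same line as the paper's: show that the locus $T[j:i-1]$ is neither type-1 in $\PTree_i$ (it is no longer a leaf, has the unique child $T[j:i]$, and is not the root) nor type-2 (its suffix-link target $T[j+1:i-1]$, with $j+1\le k_{i-1}-1$, is likewise not type-1 by the same reasoning). The paper compresses this into three lines by anchoring at the boundary case $T[k_{i-1}-1:i-1]$ and propagating downward, whereas you argue directly for a generic $j$; the underlying logic is identical, and your aside about ``non-$\Plus$'' in the setup is unnecessary (and not quite justified) but plays no role in the proof.
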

\begin{proof}
	Consider updating $\PTree_{i-1}$ to $\PTree_{i}$.
	$T[k_{i-1}-1:i-1]$ cannot be a type-1 node in $\PTree_{i}$.
	Therefore, $T[k_{i-1}-2:i-1]$ is implicit in $\PTree_{i}$.
	$T[j:i-1]$ for $j < k_{i-1}-1$ are also implicit.
\end{proof}

\begin{lemma}\label{lem:lrplusleaf}
  If $T[j:i-1]$ is a leaf in $\PTree_{i-1}$,
  then $T[j:i]$ is a $\Plus$-leaf in $\PTree_{i}$, where $1 \le j < k_{i-1}-1$.
\end{lemma}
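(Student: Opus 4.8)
The plan is to argue in two stages: first that $T[j:i]$ is a leaf of $\STrie(T[:i])$ and hence a type-1 leaf of $\PTree_{i}$, and then that its parent in $\PTree_{i}$ lies strictly above $T[j:i-1]$, which forces the incoming $\PTree_{i}$-edge of $T[j:i]$ to span at least two $\STrie$-edges --- exactly the condition defining a $\Plus$-node.

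For the first stage, I would use that $T[j:i-1]$ is a leaf of $\PTree_{i-1}$, which means $T[j:i-1]$ occurs exactly once in $T[:i-1]$, namely as the suffix starting at position $j$. Appending the symbol $T[i]$ cannot create an additional occurrence, so $T[j:i]$ occurs exactly once in $T[:i]$; thus $T[j:i]$ is a leaf of $\STrie(T[:i])$. By the definition of $\PLST$, every leaf of $\STrie(T[:i])$ is a type-1 node of $\PTree_{i}$, so $T[j:i]$ is a type-1 leaf of $\PTree_{i}$.

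For the second stage, I would invoke Lemma~\ref{lem:lropenleaf}: since $1 \le j < k_{i-1}-1$, that lemma tells us $T[j:i-1]$ is implicit in $\PTree_{i}$, i.e.\ it is neither a type-1 nor a type-2 node and therefore does not appear in $\PTree_{i}$ at all. Now $T[j:i-1]$ is the $\STrie$-parent of $T[j:i]$, hence a proper ancestor of $T[j:i]$ that is absent from $\PTree_{i}$. Consequently the parent of $T[j:i]$ in $\PTree_{i}$ must be a proper ancestor of $T[j:i-1]$ in $\STrie(T[:i])$, lying at string depth at most $i-j-2$. Hence the downward path in $\STrie(T[:i])$ from that parent to $T[j:i]$ has length at least two, so by the definition of $\Plus$-nodes the node $T[j:i]$ carries a $\Plus$ sign. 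Combining the two stages, $T[j:i]$ is a $\Plus$-leaf of $\PTree_{i}$.

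The argument is short, and I do not expect a genuine obstacle; the only point requiring care is that the hypothesis $j < k_{i-1}-1$ is precisely what is needed to apply Lemma~\ref{lem:lropenleaf}, and that "implicit in $\PTree_{i}$" really does remove $T[j:i-1]$ from the tree, so that the pre-LST parent of $T[j:i]$ is forced strictly past it.
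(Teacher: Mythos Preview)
Your proof is correct and rests on the same key fact as the paper's: that $T[j:i-1]$ is not a node of $\PTree_{i}$, which forces the $\PTree_{i}$-parent of $T[j:i]$ to lie strictly above it and hence makes $T[j:i]$ a $\Plus$-leaf. The only difference is packaging: the paper argues by contradiction and re-derives inline that $T[j:i-1]$ can be neither type-1 (it was a leaf in $\PTree_{i-1}$, so non-branching and non-leaf in $\PTree_{i}$) nor type-2 (its suffix-link target $T[j+1:i-1]$ is likewise not type-1), whereas you give a direct argument and simply invoke Lemma~\ref{lem:lropenleaf} for the same conclusion. Your version is slightly more modular and also makes explicit why $T[j:i]$ is a leaf at all, a point the paper leaves tacit.
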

\begin{proof}
  Assume on the contrary that $T[j:i-1]$ is a leaf in $\PTree_{i-1}$
  but $T[j:i]$ is not a $\Plus$-leaf in $\PTree_{i}$.
  Then $T[j:i-1]$ is a node in $\PTree_{i}$.
  Since $T[j:i-1]$ is a leaf in $\PTree_{i-1}$,
  $T[j:i-1]$ cannot be a type-1 node in  $\PTree_{i}$.
  Moreover, $T[j+1:i-1]$ is a leaf in $\PTree_{i-1}$,
  thus $T[j+1:i-1]$ cannot be a type-1 node in $\PTree_{i}$ and $T[j:i-1]$ cannot be a type-2 node in $\PTree_{i}$.
  Therefore, $T[j:i-1]$ is neither type-1 nor type-2 node in $\PTree_{i}$, which contradicts the assumption.
\end{proof}
\Cref{lem:lropenleaf} shows that we do not need to add nodes on the leaves of $\mathcal{P}_{i-1}$ besides $T[k-1:i]$ leaf
and \Cref{lem:lrplusleaf} shows that we can update all leaves $T[j:i]$ for $l \le j < k-1$ to a $\Plus$-leaf.
Therefore, besides the leaf for $T[k-1:i]$, once we update a leaf to $\Plus$ node, 
we do not need to update it again.
\Cref{fig:ltor_updateleft} shows an illustration of how to update this part.

Next, we consider updating the parts of $\PTree_{i-1}$ that correspond to $T[j:i-1]$ for $j \ge k_{i-1}$.
If $T[k_{i-1}:i]$ exists in the current LST (namely $T[k_{i-1}:i]$ occurs in $T[:i-1]$),
then the $j \ge k_{i-1}$ part of the current LST does not need to be updated.
Then we have $k_{i} = k_{i-1}$ and $T[k_{i}:i]$ is the active point of $\PTree_{i}$.
Otherwise, we need to create new nodes recursively from the active point
that will be the parents of new leaves.
There are three cases for the active point $T[k_{i-1}:i-1]$ in $\PTree_{i-1}$:

\textbf{Case 1}: $T[k_{i-1}:i-1]$ is a type-1 node in $\PTree_{i-1}$.
Let $T[p:i]$ be the longest suffix of $T[k_{i-1}:i]$ that exists in $\PTree_{i-1}$.
Since $T[k_{i-1}:i-1]$ is a type-1 node, $T[j:i-1]$ is also a type-1 node for $k_{i-1} \le j < p$.
Therefore, we can obtain $\PTree_{i}$ by adding a leaf from the node representing
$T[j:i-1]$ for every $k \le j < p$,
with edge label $c$ by following the suffix link chain from $T[k_{i-1}:i-1]$.
In this case, we only need to add one new type-2 node, which is $T[k_{i-1}-1:i-1]$
that is connected to the type-1 node $T[k_{i-1}:i-1]$ by the suffix link.
Moreover, $p$ will be the active position for $T[:i]$, namely $k_{i} = p$.

\textbf{Case 2}: $T[k_{i-1}:i-1]$ is a type-2 node in $\PTree_{i-1}$.
Similarly to Case 1,
we add a leaf from the node representing $T[j:i-1]$ for every $k_{i-1} \le j < p$
with edge label $c$ by following the suffix link chain from $T[k_{i-1}:i-1]$,
where $p$ is defined as in Case 1..
Then, $T[k_{i-1}:i-1]$ becomes a type-1 node,
and a new type-2 node $T[k_{i-1}-1:i-1]$ is added and is connected to
this type-1 node $T[k_{i-1}:i-1]$ by the suffix link.
Moreover, for any symbol $d$ such that $dT[k_{i-1}:i-1]$ is a substring
of $T[:i]$, a new type-2 node for $dT[k_{i-1}:i-1]$ is added to the tree,
and is connected by the suffix link to this new type-1 node $T[k_{i-1}:i-1]$.
These new type-2 nodes can be found in the same way as
in Lemma~\ref{lem:rlnewtype2} for our right-to-left LST construction.
Finally, $p$ will become the active position for $T[:i]$, namely $k_i = p$.

\textbf{Case 3}: $T[k_{i-1}:i-1]$ is implicit in $\PTree_{i-1}$.
In this case, there is a position $p > k_{i-1}$ such that $T[p:i-1]$ is a type-2 node.
We create new type-1 nodes $T[j:i-1]$ and leaves $T[j:i]$ for $k \le j < p$,
then do the same procedure as Case 2 for $T[j:i-1]$ for $p \le j$.

\Cref{fig:ltor_updateright} shows an illustration of how to add new leaves.
Algorithm~\ref{alg:ltor} shows a pseudo-code of our 
left-to-right online algorithm for constructing LSTs.
In Case 1 or Case 2, the algorithm
checks whether there is an out-going edge labeled with $c = T[i]$,
and performs the above procedures 
(lines~\ref{line:ltor_addbranch_start}--\ref{line:ltor_addbranch_end}).
In Case 3, we perform $\ReadEdge$ to check if the active point can
proceed with $c$ on the edge.
The function $\ReadEdge$ returns the location of the new active point and sets $\Flag=\False$ if there is no mismatch, or it returns the mismatching position and sets $\Flag=\True$
if there is a mismatch.
If there is no mismatch, then we just update the $T[j:i-1]$ part of the current LST
for $j < k_{i-1}$.
Otherwise, then we create new nodes as explained in Case 3,
by $\Split$ in the pseudo-code.

\begin{figure}[t]
	\begin{minipage}[c]{0.54\hsize}
	\centering
	\includegraphics[scale=0.72]{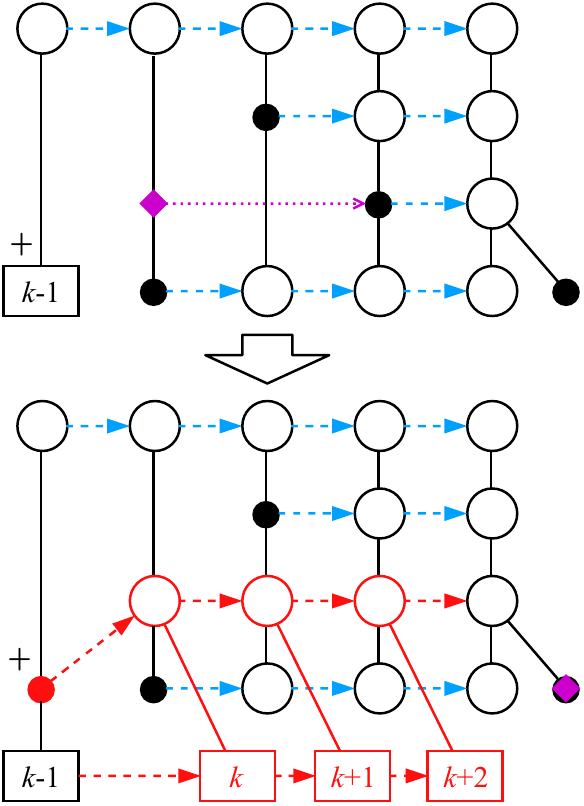}
	\caption{
		Illustration for updating the parts of $\PTree_{i-1}$ that correspond to $T[j:i-1]$ for $j \ge k_{i-1}$.
		The purple diamond and arrow show the active point and its virtual position when reading the edge.
		The new branches, nodes, and their suffix links are colored red.
	}
	\label{fig:ltor_updateright}
	\end{minipage}
	\hfill
	\begin{minipage}[c]{0.41\hsize}
	\centering
	\includegraphics[scale=0.41]{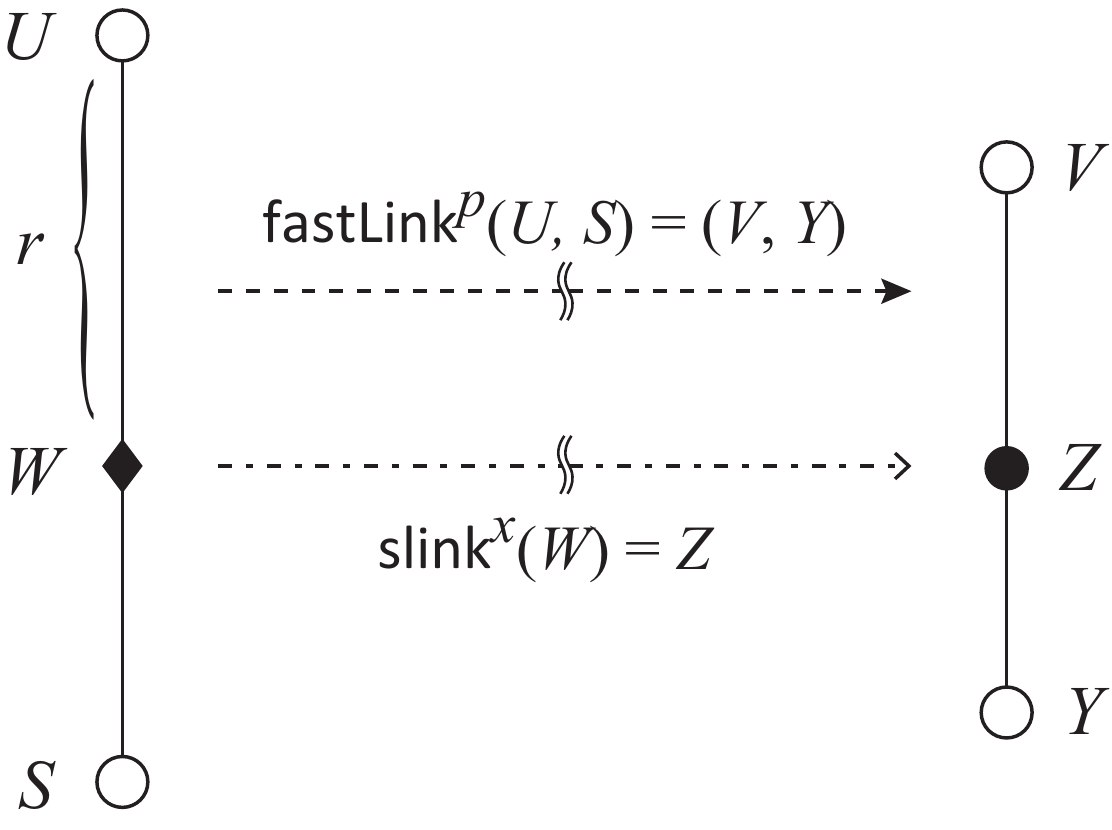}
	\caption{Illustration for our analysis of the cost to maintain the active point. The diamond shows the current location of the active point. New leaves will be created from $W$ to $Z$ by following the (virtual) suffix link chain of length $x$. When we have reached the edge $(V, Y)$, we have already retrieved the corresponding prefix of the label between $U$ and $W$. The rest of the label can be retrieved by at most $r$ applications of $\FLink$ from edge $(V, Z)$.}
\label{fig:active_point}
\end{minipage}
\end{figure}

A snapshot of right-to-left LST construction is shown in
Figure~\ref{fig:ltor_example} of Appendix.

We discuss the time complexity of our left-to-right online construction for LSTs.
%\begin{wrapfigure}[20]{r}{60mm}
%\part{	\includegraphics[scale=0.5]{fig/active_point}
%	\caption{Illustration for our analysis of the cost to maintain the active point. The diamond shows the current location of the active point. New leaves will be created from $W$ to $Z$ by following the (virtual) suffix link chain of length $x$. When we have reached the edge $(V, Y)$, we have already retrieved the corresponding prefix of the label between $U$ and $W$. The rest of the label can be retrieved by at most $r$ applications of $\FLink$ from edge $(V, Z)$.}
%	\label{fig:active_point}}
%\end{wrapfigure}
To maintain the active point for each $T[:i]$,
we use a similar technique to Lemma~\ref{lem:patmatch}.
\begin{lemma} \label{lem:active_point_maintain}
The active point can be maintained in $O(f(n)+\log \sigma)$ amortized time
per each iteration,
where $f(n)$ denotes the time for accessing $\FLink$ in our growing LST.
\end{lemma}

\begin{proof}
We consider the most involved case where 
the active point lies on an implicit node $W$
on some edge $(U, S)$ in the current LST.
The other cases are easier to show.
Let $r = |W|-|U|$, i.e., the active point is hanging off $U$ with string depth $r$.
Let $Z$ be the type-2 node from which a new leaf will be created.
By the monotonicity on the suffix link chain
there always exists such a type-2 node.
See Figure~\ref{fig:active_point} for illustration.
Let $p$ be the number of applications of $\FLink$ from edge $(U, S)$
until reaching the edge $(V, Y)$ on which $Z$ lies.
Since such a type-2 node $Z$ always exists,
we can sequentially retrieve the first $r$ symbols with at most $r$ applications of $\FLink$
by the same argument to Lemma~\ref{lem:patmatch}.
Thus the number of applications of $\FLink$
until finding the next location of the active point is bounded by $p+r$.
If $x$ is the number of (virtual) suffix links from $W$ to $Z$,
then $p \leq x$ holds.
Recall that we create at least $x+1$ new leaves by following the (virtual) suffix link chain
from $W$ to $Z$.
Now $r$ is charged to the number of text symbols read on the edge from $U$,
and $p$ is charged to the number of newly created leaves,
and both of them are amortized constant as in Ukkonen's suffix tree algorithm.
Thus the number of applications of $\FLink$ is amortized constant,
which implies that it takes $O(f(n)+\log \sigma)$ amortized time to maintain the active point.
\end{proof}

To maintain $\FLink$ in our growing (suffix link) tree,
we use the nearest marked ancestor (NMA) data structure~\cite{Alstrup1998}
that allows marking, unmarking, and NMA query in an online manner in $O(\log n / \log \log n)$ time each, using $O(n)$ space on a dynamic tree of size $n$.
By maintaining the tree of suffix links of edges
enhanced with the NMA data structure, 
we have $f(n) = O(\log n / \log \log n)$ for Lemma~\ref{lem:active_point_maintain}.
This leads to the final result of this section.

\begin{theorem}\label{theorem:ltorlstconstructiontime}
	Given a string $T$ of length $n$, our algorithm constructs $\LST(T)$ in $O(n (\log \sigma + \log n / \log \log n))$ time and $O(n) $space online, by reading $T$ from the left to the right.
\end{theorem}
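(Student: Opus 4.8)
The plan is to assemble Theorem~\ref{theorem:ltorlstconstructiontime} from the structural lemmas and the amortized cost analysis already developed in this section, following the skeleton of Ukkonen's algorithm but accounting for the extra bookkeeping of type-2 nodes. First I would fix a prefix $T[:i]$ and argue that the work performed when reading $T[i]$ falls into three categories: (i) updating the ``left part'' of the tree, i.e. the leaves $T[j:i-1]$ with $j < k_{i-1}-1$, (ii) advancing the active point and possibly creating new type-1 nodes and leaves on the ``right part'' ($j \ge k_{i-1}$), and (iii) creating the auxiliary type-2 nodes that hang off any newly promoted type-1 node. For (i), Lemmas~\ref{lem:lropenleaf} and~\ref{lem:lrplusleaf} show that each leaf is touched only to flip it to a $\Plus$-leaf and never again afterwards, so the total work over the whole run is $O(n)$ pointer updates, amortized $O(1)$ per step (each pointer update costing $O(\log\sigma)$ for the dictionary operation). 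For (ii), exactly as in Ukkonen's algorithm, the number of new leaves created is at most $n$ in total because the active position $k_i$ is non-decreasing; each leaf creation costs $O(\log\sigma)$ for the edge-label dictionary, giving $O(n\log\sigma)$ overall.

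Next I would handle the cost of maintaining the active point, which is the genuinely new ingredient. Here I invoke Lemma~\ref{lem:active_point_maintain}, which already gives that the active point can be advanced in $O(f(n)+\log\sigma)$ amortized time per iteration, where $f(n)$ is the cost of one $\FLink$ access in the growing structure. The point I need to make is that $\FLink$ queries in an incrementally built suffix-link tree can be answered by the nearest-marked-ancestor (NMA) data structure of Alstrup and Holm~\cite{Alstrup1998}: we maintain the tree of suffix links of edges, mark a node precisely when it becomes the endpoint of an edge whose destination is a type-1 node (so that an NMA query from $(U,V)$ returns the first ancestor where the $\FLink$ chain ``stops''), and each $\mathsf{mark}$, $\mathsf{unmark}$, and NMA query costs $O(\log n/\log\log n)$ on a dynamic tree of size $n$ within $O(n)$ space. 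Since each insertion of a node or promotion to type-1 triggers only $O(1)$ marking/unmarking operations, the total NMA maintenance cost is $O(n\log n/\log\log n)$, so $f(n)=O(\log n/\log\log n)$. Plugging this into Lemma~\ref{lem:active_point_maintain} yields $O(\log\sigma + \log n/\log\log n)$ amortized per step for active-point maintenance.

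Then I would deal with category (iii), the creation of new type-2 nodes when a node is promoted from type-2 to type-1 (Case~2) or created fresh as a branching node (Case~3). These are located exactly as in Lemma~\ref{lem:rlnewtype2}: follow the reversed suffix link of the nearest type-1 descendant, take parents, and follow suffix links, so each such type-2 node is found in $O(\log\sigma)$ time. By Lemma~\ref{lem:rltype2to1} (and its left-to-right analogue — the number of secondary DAWG edges of the reversed string is at most $n-1$), the total number of type-2 nodes ever created, summed over all $i$, is $O(n)$, so this contributes another $O(n\log\sigma)$ in total. Combining the three categories together with the active-point maintenance, the total running time is $O(n(\log\sigma + \log n/\log\log n))$, and the space is $O(n)$ because $\PTree_i$ has $O(n)$ type-1 and type-2 nodes, each dictionary has total size $O(n)$, and the NMA structure uses $O(n)$ space; finally, since $T$ ends with the unique terminal symbol $\$$, $\PLST(T)=\LST(T)$, which gives the stated claim.

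The main obstacle I anticipate is not any single lemma but making the amortization airtight across the interaction of the three categories — in particular, showing that the $\FLink$ applications counted in Lemma~\ref{lem:active_point_maintain} (the quantities $p$ and $r$) are charged to disjoint resources (newly created leaves and text symbols read on an edge, respectively) and that these charges do not collide with the charges used for the left-part updates of category~(i) or the type-2 creations of category~(iii). I would argue these are genuinely independent potentials: category~(i) is paid by a monotone ``number of $\Plus$-leaves'' potential, the active-point advance is paid by the Ukkonen-style ``$k_i$ is non-decreasing'' potential, and category~(iii) is paid by the global bound of $n-1$ on type-2 nodes; since no single tree object is charged twice, the sum of the amortized costs is legitimate.
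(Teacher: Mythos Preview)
Your proposal is correct and follows the same approach as the paper; in fact the paper's own argument for Theorem~\ref{theorem:ltorlstconstructiontime} is terser than yours, consisting essentially of invoking Lemma~\ref{lem:active_point_maintain} and observing that the NMA structure of~\cite{Alstrup1998} on the suffix-link tree of edges gives $f(n)=O(\log n/\log\log n)$, with the remaining costs (your categories (i)--(iii)) left implicit. Your decomposition and the disjointness-of-charges remark make those implicit parts explicit; the only imprecision is your stated NMA marking rule, which should instead mark an edge whenever a node appears strictly between the suffix-link images of its endpoints (i.e., when $\Slink(U)\neq\Parent(\Slink(V))$), but the paper does not spell this out either and it does not affect the bound.
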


\section{Conclusions and Future Work}

In this paper we proposed a right-to-left online algorithm which constructs linear-size suffix trees (LSTs) in $O(n \log \sigma)$ time and $O(n)$ space,
and a left-to-right online algorithm which constructs LSTs in $O(n (\log \sigma + \log n / \log \log n))$ time and $O(n)$ space,
for an input string of length $n$ over an ordered alphabet of size $\sigma$.
Unlike the previous construction algorithm by Crochemore et al.~\cite{Crochemore2016},
our algorithms do not construct suffix trees as an intermediate structure,
and do not require to store the input string.
Fischer and Gawrychowski~\cite{0001G15} showed
how to build suffix trees in a right-to-left online manner
in $O(n(\log\log n + \log^2 \log \sigma / \log \log \log \sigma))$ time
for an integer alphabet of size $\sigma = n^{O(1)}$.
It might be possible to extend their result to 
our right-to-left online LST construction algorithm.
An improvement of the running time of left-to-right online LST construction
is also left for future work.

Takagi et al.~\cite{Takagi2017} proposed \emph{linear-size CDAWGs} (\emph{LCDAWG}),
which are edge-labeled DAGs obtained by merging isomorphic subtrees of LSTs.
They showed that the LCDAWG of a string $T$ takes only $O(e+e')$ space,
where $e$ and $e'$ are respectively the numbers of right and left extensions of
the maximal repeats in $T$,
which are always smaller than the text length $n$.
Belazzougui and Cunial~\cite{BelazzouguiC17} proposed
a very similar CDAWG-based data structure that uses only $O(e)$ space.
It is not known whether these data structures can be efficiently constructed in an online manner,
and thus it is interesting to see if our algorithms can be extended to these data structures.
The key idea to both of the above CDAWG-based structures is to implement
edge labels by \emph{grammar-compression} or \emph{straight-line programs},
which are enhanced with efficient grammar-compressed data structures~\cite{GasieniecKPS05,BilleLRSSW15}.
In our online setting, the underlying grammar needs to be dynamically updated,
but these data structures are static.
It is worth considering if these data structures can be efficiently dynamized
by using recent techniques such as e.g.~\cite{GawrychowskiKKL18}.

\clearpage

\bibliographystyle{plainurl}
\bibliography{ref}

\newpage
\appendix
\appendix

\section{Supplementary Figures}

\begin{figure}[!ht]
	\centering
	\includegraphics[scale=0.93]{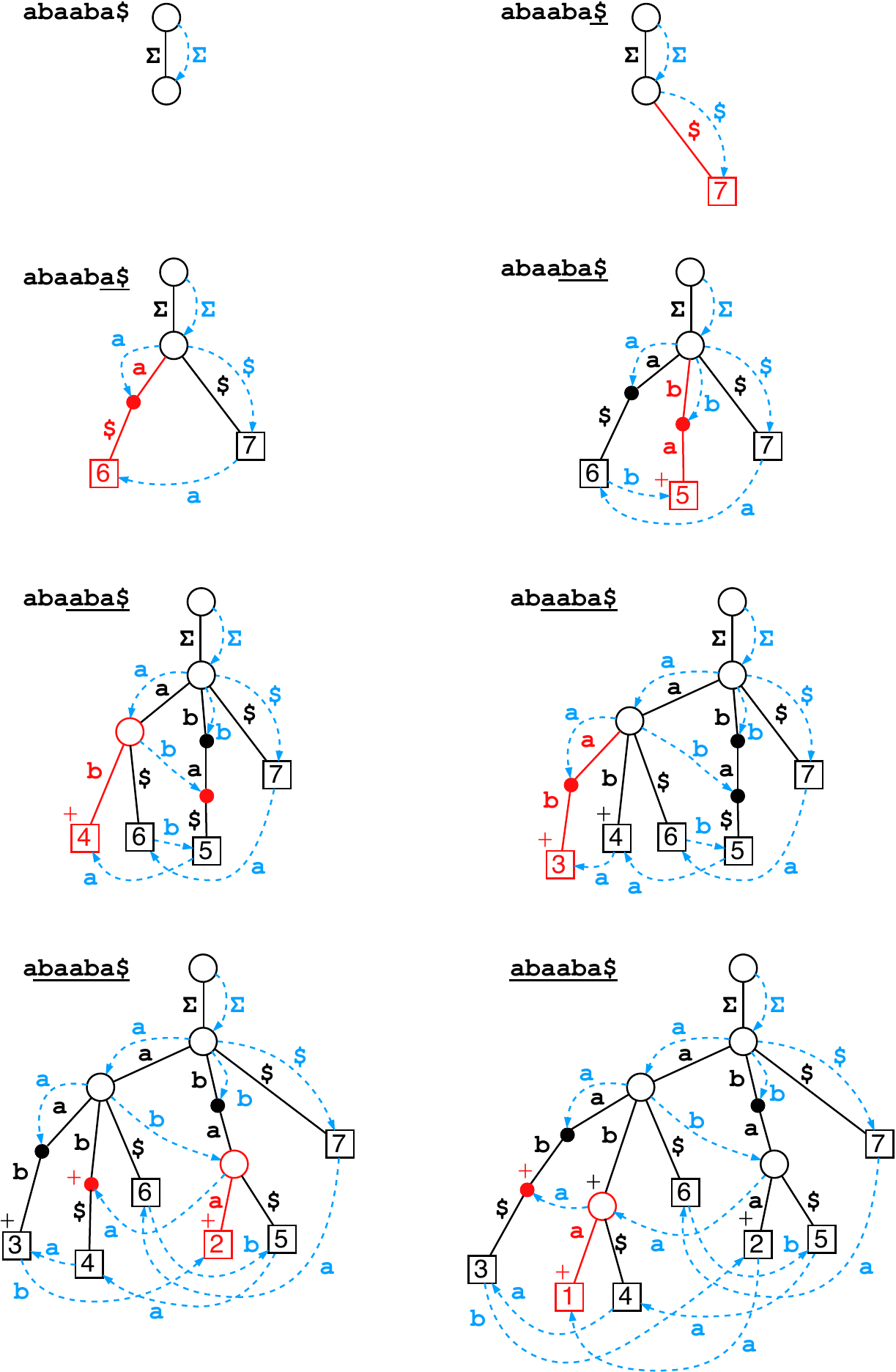}
	\vspace{3mm}
	\caption{
                A snapshot of right-to-left online construction of $\LST(T)$ with $T = \mathtt{abaaba\texttt{\$}}$ by Algorithm~\ref{alg:rtol}.
		The white circles show Type-1 nodes, the black circles show Type-2 nodes,
		and the rectangles show leaves. 
		The reverse suffix links and its label are colored blue.
		The new branches and nodes are colored red.
	}
	
	\label{fig:rtol_example}
\end{figure}

\begin{figure}[!ht]
	\centering
	\includegraphics[scale=0.93]{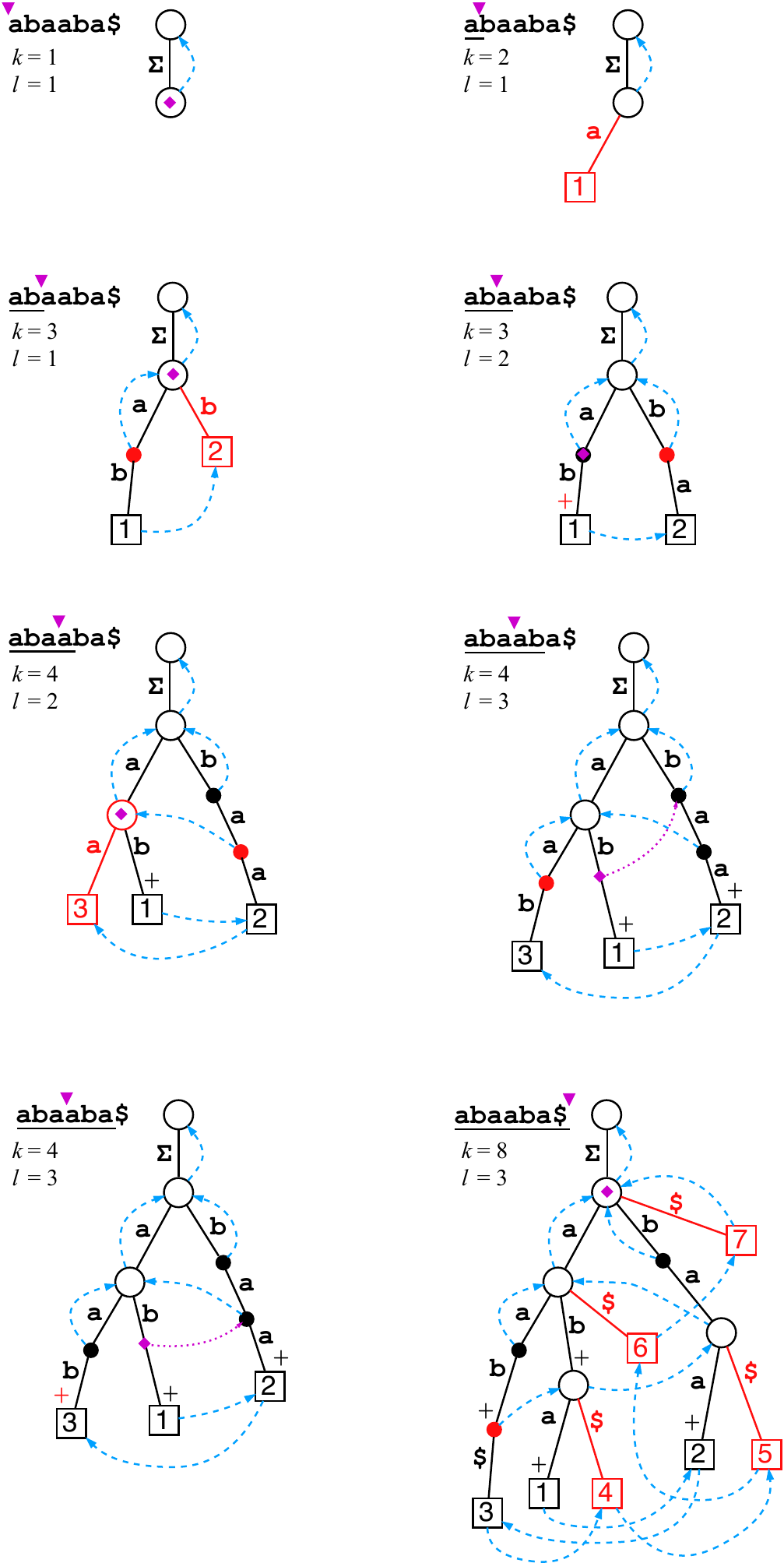}
	\vspace{3mm}
	\caption{
		A snapshot of left-to-right online construction of $\LST(T)$ with $T = \mathtt{abaaba\texttt{\$}}$ by Algorithm~\ref{alg:ltor}.
%		The big circles show Type-1 nodes, the small circles show Type-2 nodes,
%		and the rectangles show leaves. 
		The purple diamond and arrow represent the active point and its virtual position when reading the edge label.
		The suffix links are colored blue.
		The new branches and nodes are colored red.
		$k$ is the active position and $l$ is the boundary position
                for $\Plus$-leaves and non-$\Plus$ leaves defined in Lemma~\ref{lem:leafpoint}.
	}
	
	\label{fig:ltor_example}
\end{figure}

\clearpage

\section{Pseudo-codes}

%\noindent \textbf{General}
%\begin{itemize}
%	\item $\Parent(U)$: the parent node of $U$.
%	\item $\Label(U,V)$: the label of the edge connecting $U$ and $V$.
%	\item $\Child(U,c)$: the child of $U$ whose edge label is $c$.
%	\item $\Plus(U)$: $\Plus(U)=\True$ if $U$ is a $\Plus$ node of $\False$.
%	\item $\Slink(U)$: $\Slink(U)=V$ where $U = cV$ for some $c$.
%	\item $\Rlink(V,c)$: $\Rlink(U,c)=V$ where $U = cV$.
%\end{itemize}
%\textbf{For type-1 nodes}
%\begin{itemize}
%	\item $\TParent(U)$: the nearest type-1 ancestor of $U$.
%	\item $\TChild(U,c)$: the nearest type-1 descendant of $U$ on $c$ edge.
%\end{itemize}
%\textbf{For type2 nodes}
%\begin{itemize}
%	\item $\Child(U)$: the child of $U$.
%	\item $\Label(U)$: the label of the edge connecting $U$ and its child.
%\end{itemize}

\begin{algorithm2e}[!ht]
	\caption{Right-to-left linear-size suffix trie construction algorithm}
	\label{alg:rtol}
	\SetVlineSkip{0.5mm}
	$\Child(\bot,c) \ot \Root$ for any $c \in \Sigma$;
	$\Rlink(\bot,c) \ot \Root$ for all $c \in \Sigma$\;
	$\PrevInsPoint \ot \bot$;
	$\PrevLeaf \ot root$;
	$\PrevLabel \ot \NULL$\;
	\For{$i=n$ \textbf{to} $1$}{
		$c \ot T[i]$;
		$U \ot \PrevInsPoint$\; 
		\lWhile{$\Rlink(U,c) = \NULL$}{
			$U \ot \Parent(U)$}\label{line:rtol_findbranching}
		$\nextNode \ot \Rlink(U,c)$\;
		\If{$\Type(\nextNode) = 2$}{
			$\CreateTypeTwo(\nextNode)$\;
			$\Type(\nextNode) \ot 1$\; 
		}
		create a leaf $\newLeaf$\; \label{line:rtol_addbranch_start}
		$V \ot \PrevLeaf$;
		$U \ot \PrevInsPoint$;
		$Y \ot \newLeaf$\;
		\While{$\Rlink(U,c) = \NULL$}{
			create a type-2 node $X$\;
			\leIf{$U = \PrevInsPoint$}{
				$a = \PrevLabel$}
			{
				$a = \Label(U,V)$}
			\lIf{$\Plus(V) = \True$ \bf{or} $\Child(U,a) \ne V$}{$\Plus(Y) \ot \True$}
			$\Child(X,a) \ot Y$;
			$\Rlink(U,c) \ot X$;
			$Y \ot X$\;
			$V \ot U$\;
			\lRepeat{$\Type(U) = 1$}{$U \ot \Parent(U)$}
		}
		\leIf{$U = \bot$}{
			$a = c$}
		{
			$a = \Label(U,V)$}
		\lIf{$\Plus(V) = \True$ \bf{or} $\Child(U,a) \ne V$}{$\Plus(Y) \ot \True$}
		$\Child(\nextNode,a) \ot Y$\;\label{line:rtol_addbranch_end}
		$\PrevInsPoint \ot \nextNode$;
		$\PrevLeaf \ot \newLeaf$;
		$\PrevLabel \ot a$\;
	}
\end{algorithm2e}

\begin{algorithm2e}[!ht]
	\caption{$\CreateTypeTwo(U)$}
	\label{alg:createtype2}
	\SetVlineSkip{0.5mm}
	\Fn{$\CreateTypeTwo(U)$}{
		$V \ot U$;
		$b = \Label(U)$;
		$Z \ot \TChild(U,b)$\;
		\For{$d$ \rm{such that} $\Rlink(Z,d) \ne \NULL$}{
			%$P \ot \Rlink(Y,d)$;
			$Q \ot \Rlink(Z,d)$\;
			$P \ot \Parent(Q)$\;
			\If{$\Slink(P) \ne \NULL$}{
				$a \ot \Label(P,Q)$\;
				$Y \ot \Slink(P)$\;
				create a type-2 node $R$\;
				$\Child(P,a) \ot R$;
				$\Child(R,b) \ot Q$\;
				\lIf{$\Child(Y,a) \ne U$ \bf{or} $\Plus(\Child(Y,a)) = \True$}{$\Plus(R) \ot \True$}
				\lIf{$\Child(U,b) \ne Z$ \bf{or} $\Plus(\Child(U,b)) = \True$}{$\Plus(Q) \ot \True$}
			}
		}
	}
	
\end{algorithm2e}

\begin{algorithm2e}[!ht]
	\caption{Left-to-right linear-size suffix trie construction algorithm}
	\label{alg:ltor}
	\SetVlineSkip{0.5mm}
	create $\Root$ and $\bot$;
	%	create an empty $\Stack$\;
	$\Child(\bot,c) \ot \Root$ for any $c \in \Sigma$\;
	$\activeNode = \Root$;
	$i \ot 1$;
	$l \ot 1$;
	$k \ot 1$\;
	\While{$i \le n$}{
		$c \ot T[i]$\;
		\If{$\Child(\activeNode,c) \ne \NULL$}{
			$V \ot \Child(\activeNode,c)$\;
			$(U,i',\Flag) \ot \ReadEdge((\activeNode,V),i)$\;
			\If{$\Type(\activeNode) = 1$}{
				create a type-2 node $W$\;
				$V \ot \Parent(\Leaf[k-1])$\;
				$\Child(W,c) \ot \Leaf[k-1]$;
				$\Child(V,\Label(V,\Leaf[k-1])) \ot W$\;
				$\Plus(W,c) \ot \Plus(\Leaf[k-1])$;
				$\Slink(W) \ot \activeNode$;
			}
			\lElse{
				$\Plus(\Leaf[k-1]) \ot \True$}
			\lWhile{$j \ne k-1$}{
				$\Plus(\Leaf[l]) \ot \True$;
				$l \ot l + 1$}
			\If{$\Flag = \False$}{
				\lIf{$\Plus(U) = \True$}{
					$\Plus(\Leaf[k-1]) \ot \True$}
			}
			\lElse{
				$\Split(U,\activeNode,c,i,i')$}
			$\activeNode \ot U$;	
			$i \ot i'$\;
		}
		\Else{\label{line:ltor_addbranch_start}			
			\If{$\Type(\activeNode) = 2$}{
				$\CreateTypeTwo(\activeNode)$;
				$\Type(\activeNode) \ot 1$\;
			}
			\lWhile{$l \ne k-1$}{
				$\Plus(\Leaf[l]) \ot \True$;
				$l \ot l + 1$}
			create a type-2 node $W$;
			$V \ot \Parent(\Leaf[k-1])$\;
			$\Child(W,c) \ot \Leaf[k-1]$;
			$\Child(V,\Label(V,\Leaf[k-1])) \ot W$\;
			$\Plus(W,c) \ot \Plus(\Leaf[k-1])$;
			$\Slink(W) \ot \activeNode$\;
			\While{$\Child(\activeNode,c) = \NULL$}{
				create a leaf $U$\;
				$\Child(\activeNode,c) \ot U$;
				$\Slink(\Leaf[k-1]) \ot U$\;
				$k \ot k + 1$;
				$\Leaf[k-1] \ot U$;
				$\activeNode = \Slink(\activeNode)$\;
			}
		}\label{line:ltor_addbranch_end}
	}
\end{algorithm2e}

\begin{algorithm2e}[!ht]
	\caption{$\ReadEdge((U,V),i)$}
	\label{alg:readedge}
	\SetVlineSkip{0.5mm}
	\Fn{$\ReadEdge(U,V,i)$}{
		\While{$U \ne V$}{
			$c \ot T[i]$\;
			\lIf{$\Child(U,c) = \NULL$}{
				$\Ret$ $(U,i,\True)$}
			\Else{
				\If{$\Plus(\Child(U,c)) = \True$}{
					$(W,i,\Flag) \ot \ReadEdge(\FLink(U,\Child(U,c)),i)$\;
					\lIf{$\Flag = true$}{
						$\Ret$ $(W,i,\True)$}
					$U \ot W$\;
				}
				\lElse{
					$U \ot \Child(U,c)$;
					$i \ot i + 1$}
			}
		}
		$\Ret$ $(U,i,\False)$\;
	}
	
\end{algorithm2e}

\begin{algorithm2e}[!ht]
	\caption{$\Split(U,X,a,i,i')$}
	\label{alg:split}
	\SetVlineSkip{0.5mm}
	\Fn{$\Split(U,X,a,i,i')$}{
		$b = \Label(U,\Child(U))$;
		$c' \ot T[i']$\;
		create a type-1 node $W$\;
		$V \ot \Parent(\Leaf[k-1])$\;
		$\Child(W,c) \ot \Leaf[k-1]$;
		$\Child(V,\Label(V,\Leaf[k-1])) \ot W$\;
		$\Plus(W) \ot \Plus(\Leaf[k-1])$;
		$\LastNode \ot W$\;
		$k \ot k+1$;
		$Y' \ot \Leaf[k-1]$\;
		\While{$X \ne U$}{
			\lIf{$\Type(x) = 1$}{
				$Y \ot \Child(X,a)$}
			$d = \STriedepth(Y) - \STriedepth(X)$\;
			\While{$d < i' - i$}{
				$X \ot Y$;
				$i \ot i + d$\;
				$Y \ot \Child(X)$;	
				$d \ot \STriedepth(Y) - \STriedepth(X)$\;
			}
			\If{$X \ne U$}{
				create a type-2 node $Z$;
				create a leaf $Y'$;
				$a \ot \Label(X,Y)$\;
				$\Child(X, a) \ot Z$;
				$\Child(Z, b) \ot Y$;
				$\CreateTypeTwo(Z)$\;
				$\Type(Z) \ot 1$;
				$\Child(Z, c') \ot Y'$\;
				\lIf{$i' - 1 > 1$}{$\Plus(Z) \ot \True$}
				\lIf{$d - (i' - 1) > 1$}{$\Plus(Y) \ot \True$}
				$\Slink(\LastNode) \ot Z$;
				$\Slink(\Leaf[k-1]) \ot Y'$\;
				$k \ot k + 1$;
				$\Leaf[k-1] \ot Y'$\;
				$\LastNode \ot Z$;
				$X \ot \Slink(X)$\;
			}
		}
		$\Slink(\LastNode) \ot U$\;
	}
\end{algorithm2e}

\begin{algorithm2e}[!ht]
	\caption{Fast pattern matching algorithm with the LST}
	\label{alg:patmatch}
	\SetVlineSkip{0.5mm}
	let P be a pattern and i be a global index.
	
	\Fn{$\FastMatching(P)$}{
		$U \ot \Root$;
		$i \ot 1$\;
		\While{$i \le |P|$}{
			\If{$\Child(U,P[i]) \ne \NULL$}{
				$U \ot \FastDecompact(U,\Child(U,P[i]))$\;
				\lIf{$U = \NULL$}{
					$\Ret$ $\False$}
			}
			\lElse{
				$\Ret$ $\False$}
		}
		$\Ret$ $\True$\;
	}
	
	\Fn{$\FastDecompact(U,V)$}{
		\While{$U \ne V$}{
			\If{$\Child(U,P[i]) \ne \NULL$}{
				\If{$\Plus(\Child(U,P[i])) = \False$}{
					$U \ot \Child(U,P[i])$\;
					$i \ot i+1$\;
				}
				\lElse{
					$U = \FastDecompact(\FLink(U),\FLink(\Child(U,P[i])))$}
				\lIf{$i > |P|$}{
					$\Ret$ $V$}
			}
			\lElse{
				$\Ret$ $\NULL$}
		}
		$\Ret$ $V$\;
	}
\end{algorithm2e}

\end{document}